\documentclass[runningheads,a4paper]{llncs}
\usepackage{epsfig}
\usepackage{amssymb,amsmath}
\usepackage{mathrsfs}
\setcounter{tocdepth}{3}
\usepackage{graphicx}
\usepackage{multirow}
\usepackage{color}


\usepackage{url}



\newcommand{\remove}[1]{}


\newcommand{\AMDdelta}{\delta}
\newcommand{\rhoAMD}{$\rho$-AMD }
\newcommand{\LVAMD}{$\rho^{LV}$-AMD }

\newcommand{\sfH}{\mathsf{H}}

\newcommand{\bP}{\mathsf{Pr}}

\newcommand{\bF}{{\mathbb F}}

\newcommand{\SD}{\mathsf{SD}}

\newcommand{\ext}{\mathsf{Ext}}

\newcommand{\bfa}{{\bf a}}

\newcommand{\bfm}{{\bf m}}

\newcommand{\bfs}{{\bf s}}

\newcommand{\bfx}{{\bf x}}

\newcommand{\bfM}{{\bf M}}
\newcommand{\bfR}{{\bf R}}
\newcommand{\bfS}{{\bf S}}

\newcommand{\bfX}{{\bf X}}
\newcommand{\bfZ}{{\bf Z}}

\newcommand{\cS}{{\cal S}}

\newcommand{\eeeenc}{\mathsf{Enc}}
\newcommand{\eeedec}{\mathsf{Dec}}

\newcommand{\share}{\mathsf{Share}}
\newcommand{\recover}{\mathsf{Recover}}
\newcommand{\rsss}{\mathsf{rsss}}
\newcommand{\rrsss}{\mathsf{rrsss}}
\newcommand{\amd}{\mathsf{amd}}

\newtheorem{construction}{Construction}
\begin{document}
\title{Detecting Algebraic Manipulation in Leaky Storage Systems}
\author{Fuchun Lin, Reihaneh Safavi-Naini, Pengwei Wang}
\institute{Department of Computer Science\\University of Calgary, CANADA}                                                                                                                                                                                                                                                                                                                                                                                                                                                                                                                                                                                                                                                                                                                                                                                                                                                                                                                                                                                                                                                                                                                                                                                                                                                                                                                                                                                                                                                                                                                                                                                                                                                                                                                                                                                                                                                                                                                                                                                                                                                                                                                                                                                                                                                                                                                                                                                                                                                                                                                                                                                                                                                                                                                                                                                                                                                                                                                                                                                                                                                                                                                                                                                                                                                                                                                                                                                                                                                                                                                                                                                                                                                                                                                                                                                                                                                                                                                                                                                                                                                                                                                                                                                                                                                                                                                                                                                                                                                                                                                                                                                                                                                                                                                                                                                                                                                                                                                                                                                                                                                                                                                                                                                                                                                                                                                                                                                                                                                                                                                                                                                                                                                                                                                                                                                                                                                                                                                                                                                                                                                                                                                                                                                                                                                                                                                                                                                                                                                                                                                                                                                                                                                                                                                                                                                                                                                                                                                                                                                                                                                                                                                                          

\maketitle

\begin{abstract}
Algebraic Manipulation Detection (AMD) Codes detect   adversarial noise that is
added to a coded message  which is stored in a storage that is opaque to the adversary.
We study AMD codes 
when the  storage can leak up to  $\rho\log|\mathcal{G}|$ bits of information about the stored codeword, where $\mathcal{G}$ is the group 
that contains the codeword 
 and  $\rho$ is a constant. We propose 
\rhoAMD  codes that provide protection in this new setting. We
 define  weak and strong \rhoAMD codes that provide  security for a random
  and an arbitrary message, respectively.
We derive concrete and asymptotic bounds for the efficiency of these codes featuring a rate upper bound of $1-\rho$ for the strong codes.
We also define the class of \LVAMD codes  that provide protection 
when 
leakage
 is   in the form of  a 
 number of  codeword components, 
and
give constructions featuring a family of strong \LVAMD codes that  asymptotically achieve the rate $1-\rho$.
 We  describe applications of \rhoAMD codes to, (i)    robust ramp secret sharing scheme and, (ii)  wiretap II channel when the adversary can
 eavesdrop a $\rho$ fraction of codeword components and 
 tamper with all components of the codeword.

\end{abstract}

\section{Introduction}

Algebraic Manipulation Detection (AMD) Codes \cite{AMD}  protect messages against additive adversarial tampering, assuming the codeword cannot be ``seen" by the adversary.
In AMD codes, 
a message  is encoded  to a codeword that is an element of a
publicly known group $\cal G$. The codeword  is stored
in a private storage which is perfectly opaque  to the adversary. 
The adversary however can 
add an {\em arbitrary }
element of $\cal G$  to the storage 
to make the decoder output a different message.  
A $\AMDdelta$-secure AMD code guarantees that any such manipulation   succeeds with probability at most $\AMDdelta$. 
Security of AMD codes has been defined for  ``weak" and ``strong" codes: weak codes provide security assuming message distribution
is uniform, while 
strong codes guarantee security 
for any message distribution. 
Weak AMD codes are primarily deterministic codes  and security relies on the randomness of the message space.  
Strong AMD codes are randomized codes and provide security for any message.
\remove{
\textcolor{red}{In  {\em systematic AMD codes} a tag  function $f$ is used to calculate a tag.
 The codeword of a  systematic strong AMD code is of the form $(\mathbf{m},\mathbf{r}, f(\mathbf{m},\mathbf{r}))$, where $\mathbf{r}$ is the explicit randomness of encoding and $f$ is the  tag.} 
 }
AMD codes have wide applications as a building block of cryptographic primitives such as robust information dispersal \cite{AMD}, 
and anonymous message transmission \cite{AMD}, and have been 
used to provide a generic construction  for
 robust secret sharing schemes from
linear secret sharing schemes \cite{AMD}.

AMD codes with leakage were first considered in \cite{LLR-AMD} where  the leakage was defined for specific parts of the encoding process.
An {\em $\alpha$-weak  AMD code with linear leakage},  also called $\alpha$-weak  LLR-AMD code, is a deterministic code that guarantees security when part of the message is leaked but  the min-entropy of the message space is at least $1-\alpha$ fraction of the message length (in bits). 
An {\em $\alpha$-strong LLR-AMD }  is a randomized code that guarantees security 
 when the randomness of encoding, although partially leaked,  has at least 
min-entropy $(1-\alpha)\log |\mathcal{R}|$ where $\mathcal{R}$ is the
 randomness set of encoding.

 \remove{
An AMD encoder takes a message, and possibly a random string (in the case of strong code), and outputs a codeword $\mathbf{x}$. The above models of leakage effectively assumes the input to the encoder is partially leaked and imposes
restrictions on the amount of remaining entropy of the input to the encoder. Our proposed leakage model however focuses on
}
In this paper we consider leakage  from the storage that holds the codeword. 
This effectively  relaxes the original model of AMD codes that required the codeword to be perfectly private.
As we will show this model turns out to be more challenging compared to LLR-AMD models 
where the leakage is in a more restricted part in the encoding process.
%
 
A more detailed relation between our model and LLR-AMD models 
 is given in Section \ref{sec: LV-AMD}.

\subsection*{Our work}
We define $\rho$-Algebraic Manipulation Detection (\rhoAMD ) codes as an extension of AMD codes 
  when the storage that holds the codeword (an element of $\mathcal{G}$), leaks up to $\rho\log|\mathcal{G}|$ bits of information about the codeword. 
  We assume 
   the adversary can apply an arbitrary function to the storage 
   and receive up to $\rho\log|\mathcal{G}|$ bits of information about the codeword.
Similar to the  original AMD codes, we define  weak   and  strong \rhoAMD codes as deterministic  and randomized codes that
 guarantee security for a uniformly distributed message and any message, respectively.
  \remove{
     and provide protection against an arbitrary message chosen by the adversary. The leaked information can be used by the adversary to choose the best offset vector and maximize their chance of success in resulting an undetectable tampering.  Weak \rhoAMD codes are deterministic codes and their security guarantee is for a uniformly chosen message.  We define strong and weak codes when the storage leaks information,  and 
     }
     
Efficiency of \rhoAMD codes is defined 
concretely (similar to \cite{AMD}) and asymptotically (using 
the {\em  rate of the code family}, which is the asymptotic 
 ratio of  the message length to the codeword length, as the message length approaches infinity ).
We prove concrete bounds for both strong and weak \rhoAMD codes and
a non-trivial upper bound $1-\rho$ on the rate of the strong \rhoAMD codes.   Comparison of bounds for different models of AMD codes is summarized in Table \ref{tb: bounds}.
     
\remove{     
Efficiency of \rhoAMD codes is defined 
concretely and asymptotically using 
{\em effective tag size} 
and the {\em  rate of the code family }, respectively. 
Tag size is  the difference between the bit length  of a codeword, and the  bit length  of a message. 
The rate of a code family is the asymptotic 
 ratio of  the message length to the codeword length, as the message length approaches infinity.
We prove lower bounds on 
the effective tag size, and  an upper bound  on the rate of the \rhoAMD codes (strong). 
}

  \begin{table}
\begin{center}
  \begin{tabular}{@{} |c|c|c|@{}}
    \hline
    codes & concrete bound & rate bound  \\ 
    \hline
    strong AMD & $G\geq\frac{M-1}{\delta^2}+1$  & $1$ \\ 
     {\bf strong \rhoAMD} & $\mathbf{G^{1-\rho}\geq\frac{M-1}{\delta^2}+1}$ & $\mathbf{1-\rho}$\\ 
     $\alpha$-strong LLR-AMD & $G\geq\frac{(M-1)(1-e^{-1})}{\delta^{\frac{2}{1-\alpha}}}+1$  & $1$\\ 
    \hline
    weak AMD & $G\geq\frac{M-1}{\delta}+1$  & $1$\\ 
    {\bf weak \rhoAMD} &$\mathbf{G\geq\frac{M-1}{\delta}+1}$ {\bf and} $\mathbf{M\geq\frac{G^\rho}{\delta}}$  & $\mathbf{1}$\\ 
    
    $\alpha$-weak LLR-AMD & $G\geq\frac{(M-1)(1-e^{-1})}{\delta^{\frac{1}{1-\alpha}}}+1$ and $G\geq\frac{M^\alpha (M-1)(1-e^{-1})}{\delta}+1$  & $\frac{1}{1+\alpha}$ \\ 
    \hline
  \end{tabular}
\caption{\label{tb: bounds}$G$ denotes the size of the group $\mathcal{G}$ that codewords live in and $M$ denotes the size of the message set $\mathcal{M}$. $\delta$ is the security parameter.}
\end{center}
\end{table}

\remove{
  \begin{table}
\begin{center}
  \begin{tabular}{@{} |c|c|c|c| @{}}
    \hline
    codes & concrete bound & rate bound & construction \\ 
    \hline
    strong AMD & $G\geq\frac{M-1}{\delta^2}+1$  & $1$ &\\ 
     strong \rhoAMD & $G^{1-\rho}\geq\frac{M-1}{\delta^2}+1$ & $1-\rho$&Theorem 2 in [2]\\ 
     $\alpha$-strong LLR-AMD & $G\geq\frac{(M-1)(1-e^{-1})}{\delta^{\frac{2}{1-\alpha}}}+1$  & $1$ &\\ 
    \hline
    weak AMD & $G\geq\frac{M-1}{\delta}+1$  & $1$ &\\ 
    weak \rhoAMD & $G\geq\frac{M-1}{\delta}+1$ and $M\geq\frac{G^\rho}{\delta}$  & $1$ &Theorem 2 in [1]\\ 
    
    $\alpha$-weak LLR-AMD & $G\geq\frac{(M-1)(1-e^{-1})}{\delta^{\frac{1}{1-\alpha}}}+1$ and $G\geq\frac{M^\alpha (M-1)(1-e^{-1})}{\delta}+1$  & $\frac{1}{1+\alpha}$ &\\ 
    \hline
  \end{tabular}
\caption{\label{tb: bounds}$G$ denotes the size of the group that codewords live in and $M$ is the size of the message set}
\end{center}
\end{table}
}

For construction, we use the relationship between \rhoAMD codes and LLR-AMD codes, to construct  (non-optimal) \rhoAMD codes,
and leave  the construction of rate optimal \rhoAMD codes as an interesting open problem.
We however define a special type of leakage in which leakage is specified by the number of codeword components that the 
adversary can select for eavesdropping.  
The model is called 
{\em limited-view \rhoAMD (\LVAMD)}.
The \LVAMD adversary is allowed to select a fraction $\rho$ of the {\em codeword components}, and 
  select their tampering (offset) vector 
 after seeing the values of the chosen components.
   This definition of limited-view adversary  was first used in \cite{ISIT-LV} where the writing power of the adversary was also parametrized. 
   We give an explicit construction  of strong \LVAMD codes
that achieve rate $1-\rho$, using 
an AMD code and a wiretap II code as building blocks. We note that this rate is achievable for large constant size alphabets, if we allow a seeded encoder involving a universal hash family (see \cite{eAWTP}). That is the alphabet size depends on the closeness to the actual capacity value.
Also we do not know if $1-\rho$ is the capacity of strong \LVAMD codes.
Finding the capacity of strong \LVAMD codes however is an open question as the type of leakage (component wise) is more restricted than strong \rhoAMD codes.
 \remove{
 REMOVE: We define the \textcolor{blue}{limitted-view} adversary by a set of functions, each \textcolor{red}{mapping  their view of the codeword  to an offset vector  to be added to the codeword} \textcolor{blue}{detailing an adversarial strategy: choose a subset of the codeword components to read and depending on the value of the components select an offset vector  to be added to the codeword}. 
 This approach was used in  non-malleable code  \cite{DzPiWi}.
 } 
%
We also construct a family of weak \LVAMD codes that achieve rate $1$ for any leakage parameter $\rho$.

We consider two applications. 
The first application can be seen as parallel to the application of the original AMD codes to robust secret sharing scheme.  The second application is a new variation of active adversary wiretap channel II.

\noindent {\bf Robust ramp secret sharing scheme.}
\remove{
It is shown in \cite{AMD} that a $\delta$-robust secret sharing scheme can be constructed from a linear SSS by adding a layer of AMD coding before the secret is divided into shares. The privacy of the SSS guarantees the ``abstract private storage'' and the linearity turns the corruption on shares into an additive translate on the AMD codeword.
}
A $(t,r,N)$-ramp secret sharing scheme \cite{ramp SSS,new ramp}  
is a secret sharing scheme with two thresholds, $t$ and $r$, such that any $t$ or less shares do not leak any information about the secret while any $r$ or more shares 
reconstruct the secret
and if the number $a$ of shares  is in  between $t$ and $r$,  an $\frac{a-t}{r-t}$ fraction of information  of the secret will be leaked.
We define a \textit{robust ramp secret sharing scheme} as a ramp secret sharing scheme with an additional ($\rho,\delta$)-robustness property which requires that  the  probability of reconstructing a  wrong secret, if up to $t+\lfloor\rho(r-t)\rfloor$ shares are controlled by an active adversary, is bounded by $\delta$. Here  $\rho$ is a constant.
We will show that a $(t,r,N,\rho,\delta)$-robust secret sharing scheme can be constructed from a linear $(t,r,N)$-ramp secret sharing scheme, by first encoding
the message using a \rhoAMD code with security parameter $\delta$, and then using the linear ramp secret sharing scheme to generate shares.

\remove{
property, which requires that as long as the number of compromised players is at most $t+\lfloor\rho(r-t)\rfloor$, the probability of reconstructing a wrong secret is bounded by $\delta$.
 We will show that a $(t,r,N,\rho,\delta)$-robust secret sharing scheme can be constructed from a linear $(t,r,N)$-ramp secret sharing scheme,
  by first 
encoding the message using a  \rhoAMD code and then using 
  a linear ramp secret sharing scheme to generate shares. 
}




\noindent {\bf 
Wiretap II with an algebraic  manipulation adversary.} 
Wiretap model of communication was proposed by Wyner \cite{WT}.  In wiretap II setting \cite{WtII}, 
the goal is to provide secrecy against a passive adversary who can adaptively select a fraction $\rho$ of transmitted codeword components  to eavesdrop.
We consider active wiretap II adversaries that in addition to eavesdropping the channel, algebraically manipulate the communication by adding a noise (offset) vector to the sent codeword.  The code must protect against eavesdropping and also detect tampering. 
An algebraic manipulation wiretap II code is a wiretap II code with security against an eavesdropping adversary and so the rate upper bound for wiretap II codes is applicable. Our construction of \LVAMD codes gives a family
of algebraic manipulation wiretap II codes which achieve this rate upper bound and so the construction is capacity-achieving. The result effectively shows that algebraic manipulation  detection in this case can be achieved for ``free'' (without rate loss), asymptotically

\remove{
\textcolor{blue}{ We consider active wiretap II adversaries that can also algebraically 
manipulate the communication, namely, choosing a non-zero group element $\Delta$ and adding it to the transmitted codeword $\mathbf{c}$ (assume codewords live in an additive group $\mathcal{G}$). We define algebraic adversary 
wiretap II codes  that,  in addition  to  providing secrecy against an eavesdropping adversary,  detect algebraic manipulation of the codeword.  
An algebraic adversary 
wiretap II code is by definition a wiretap II code. So the rate upper bound for wiretap II codes is a trivial upper bound for algebraic adversary 
wiretap II codes. We show that
 our construction of
\LVAMD codes in fact yields a family of algebraic adversary 
wiretap II codes which achieves the rate upper bound. This upper bound is then the capacity of algebraic adversary 
wiretap II codes and the construction is capacity-achieving.} 



}

Table \ref{tb: constructions and applications} summarizes the code constructions and applications. 

  \begin{table}
\begin{center}
  \begin{tabular}{@{} |c|c|c|@{}}
    \hline
    codes constructed & asymptotic rate & applications  \\ 
    \hline
     {\bf strong \rhoAMD} & {\bf N.A. }& {\bf $(\rho,\delta)$-robust ramp secret sharing}\\ 
     {\bf strong \LVAMD} & $\mathbf{1-\rho}$  & {\bf $(\rho,0,\delta)$-algebraic adversary wiretap II}\\ 
    \hline
     {\bf weak \rhoAMD} & {\bf N.A. } &  {\bf N.A. }\\ 
    
    {\bf weak \LVAMD} & $\mathbf{1}$  &  {\bf N.A. } \\ 
    \hline
  \end{tabular}
\caption{\label{tb: constructions and applications}Summary of codes constructed in this paper and their applications.}
\end{center}
\end{table}

 
\subsection*{Related works}
\vspace{1.5mm}
\noindent {\bf Related works.}
AMD codes were proposed in \cite{AMD} and have found numerous applications. 
 A work  directly comparable to ours  is \cite{LLR-AMD} where 
LLR-AMD code with different leakage models for weak and strong codes are introduced. 
Our leakage model uses a single leakage model for both weak and strong codes and is a natural generalization of the original AMD codes.
The relation between our model and 
 LLR-AMD codes
 is given in Section \ref{sec: LV-AMD}. 
More generally, there is a large body of work on modelling leakage and designing leakage resilient systems.
A survey 
 can be found in \cite{LRcrypto}.     
 
Ramp secret sharing  schemes (ramp SSS) are introduced in \cite{ramp SSS}.  Robust secret sharing schemes (robust SSS) are well studied (see for example \cite{AMD}). To our knowledge robust ramp secret sharing schemes (robust ramp SSS) have not been considered before. In a robust SSS, robustness is defined only when the number of the compromised players  is
below the privacy threshold of the underling SSS. Our definition of robust ramp SSS has robustness guarantee even when the number of compromised players is
 bigger than the privacy threshold. 
 
Wiretap II model with active adversary was first studied in \cite{Lai Lifeng}, where the eavesdropped components and tampered components are restricted to be the same set.
A general model of wiretap II adversaries with additive manipulation was defined in \cite{AWTP}. In this model (called adversarial wiretap or AWTP) the adversary can
read a fraction $\rho_r$, and add noise to a fraction $\rho_w$,  of the codeword components. The goal  of the encoding scheme is to provide secrecy  and
guarantee reliability (message recovery) against this adversary. A variation of AWTP called eAWTP is studied in \cite{eAWTP}, where erasure of codeword components instead of additive tampering is considered. Interestingly, both AWTP and eAWTP have the same capacity $1-\rho_r-\rho_w$.  The alphabet of known capacity-achieving codes are, $\mathcal{O}({\frac{1}{\xi^4}}^{\frac{1}{\xi^2}})$ for AWTP codes and $\mathcal{O}(2^{\frac{1}{\xi^2}})$ for eAWTP codes, respectively, where $\xi$ is the difference of the actual rate and capacity \cite{eAWTP}. 
The adversary of algebraic manipulation wiretap II codes defined in this paper can  be seen as 
 the AWTP adversary with $\rho_r =\rho$ and $\rho_w=1$, yielding $1-\rho_r-\rho_w<0$.  
In this case recovering the message is impossible. Our results on algebraic manipulation wiretap II show that a weaker goal against active attack, that is {\em to detect } manipulation of the message, is achievable and can be achieved with capacity $1-\rho$, which is the same as the capacity of wiretap II codes with no security against active attacks.



\bigskip


\noindent
{\em Organization:} In Section \ref{sec: intro}, we give notations and  introduce AMD codes (with\slash without leakage) and wiretap II codes. In Section \ref{sec: rho-AMD}, we define \rhoAMD codes and derive efficiency bounds. In Section \ref{sec: LV-AMD construction}, we study
\LVAMD codes and give concrete constructions. In Section \ref{sec: applications}, we give two applications. 


\section{Preliminaries} \label{sec: intro}
Calligraphy letters $\mathcal{X}$ denote sets and their corresponding capital letters denote the cardinality, $|\mathcal{X}|=X$. Boldface letters $\mathbf{x}$ denote vectors. $\mathbf{x}_{|S}$ denotes the sub-vector of $\mathbf{x}$ consisting of the components specified by the index set $S$. $[n]$ denotes $\{1,2,\cdots,n\}$. Capital boldface letters $\mathbf{X}$ denote 
random variables,  and $\mathbf{X}\leftarrow\mathcal{X}$ denotes sampling of the  random variable $\mathbf{X}$ 
 from the set $\mathcal{X}$,  with $\mathbf{X}\stackrel{\$}{\leftarrow}\mathcal{X}$ denoting a uniform distribution in sampling.
The statistical distance between $\mathbf{X}$ and $\mathbf{Y}$ that are both defined over the set $\mathcal{W}$, is defined as,
$$
\SD(\mathbf{X},\mathbf{Y})\triangleq \dfrac{1}{2}\sum_{\mathbf{w} \in \mathcal{W}}|\bP[\mathbf{X}=\mathbf{w}]-\bP[\mathbf{Y}=\mathbf{w}]|.
$$
We say $\mathbf{X}$ and $\mathbf{Y}$ are $\delta$-close if $\SD(\mathbf{X},\mathbf{Y})\leq \delta$. 
The \textit{min-entropy} $\sfH_\infty(\mathbf{X})$ of a random variable $\mathbf{X}\leftarrow\mathcal{X}$ is 
$$
\sfH_\infty(\mathbf{X})=-\log\max_{\mathbf{x} \in \mathcal{X}}\bP[\mathbf{X}=\mathbf{x}].
$$
The \textit{(average) conditional min-entropy} $\tilde{\sfH}_\infty(\mathbf{X}|\mathbf{Z})$  of $\mathbf{X}$ conditioned on $\mathbf{Z}$  is defined \cite{Dodis Fuzzy} as,
$$
\tilde{\sfH}_\infty(\mathbf{X}|\mathbf{Z})=-\log\left( \mathbb{E}_{\mathbf{Z}=\mathbf{z}}\max_\bfx \bP[\mathbf{X} = \bfx|\mathbf{Z}=\mathbf{z}]\right).
$$
The following bound on the amount of information about one variable that can leak through a correlated variable is proved in \cite{Dodis Fuzzy}.
\begin{lemma}\label{lem: conditional min-entropy}\cite{Dodis Fuzzy}Let $\mathbf{X}\leftarrow\mathcal{X}$ and $\mathbf{Z}\leftarrow\mathcal{Z}$ with $\ell=\log |\mathcal{Z}|$. Then
$$
\tilde{\sfH}_\infty(\mathbf{X}|\mathbf{Z})\geq \sfH_\infty(\mathbf{X})-\ell.
$$
\end{lemma}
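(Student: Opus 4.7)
The plan is to unfold the definition of average conditional min-entropy and bound the expected maximum posterior probability by $|\mathcal{Z}|$ times the maximum prior probability. Concretely, I would begin by rewriting the expectation explicitly as a sum weighted by $\bP[\mathbf{Z}=\mathbf{z}]$, and then absorb that weight into the conditional probability to obtain a joint probability:
\[
\mathbb{E}_{\mathbf{Z}=\mathbf{z}}\max_\bfx \bP[\mathbf{X}=\bfx\mid\mathbf{Z}=\mathbf{z}]
=\sum_{\mathbf{z}\in\mathcal{Z}}\max_\bfx \bP[\mathbf{X}=\bfx,\mathbf{Z}=\mathbf{z}].
\]
This step, which uses that $\max$ commutes with multiplication by the nonnegative scalar $\bP[\mathbf{Z}=\mathbf{z}]$, is the conceptual move that exposes the joint distribution.

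Next I would use the trivial bound $\bP[\mathbf{X}=\bfx,\mathbf{Z}=\mathbf{z}]\leq \bP[\mathbf{X}=\bfx]$ inside the maximum, giving $\max_\bfx \bP[\mathbf{X}=\bfx,\mathbf{Z}=\mathbf{z}]\leq \max_\bfx \bP[\mathbf{X}=\bfx] = 2^{-\sfH_\infty(\mathbf{X})}$ for every fixed $\mathbf{z}$. Summing over $\mathbf{z}\in\mathcal{Z}$ then yields a factor of $|\mathcal{Z}|=2^\ell$, so
\[
\mathbb{E}_{\mathbf{Z}=\mathbf{z}}\max_\bfx \bP[\mathbf{X}=\bfx\mid\mathbf{Z}=\mathbf{z}]\leq 2^{\ell}\cdot 2^{-\sfH_\infty(\mathbf{X})}.
\]

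Taking $-\log$ of both sides and applying the definition of $\tilde{\sfH}_\infty$ directly produces the claimed inequality $\tilde{\sfH}_\infty(\mathbf{X}\mid\mathbf{Z})\geq \sfH_\infty(\mathbf{X})-\ell$. There is no real obstacle here; the only subtle point is recognizing that we should not try to bound things pointwise in $\mathbf{z}$ (which would lose too much), but rather interchange the maximum with the probability weight so that the sum of joint probabilities telescopes into the marginal bound. The argument does not even require Jensen's inequality, because the ``expectation inside the log'' form of conditional min-entropy is precisely what makes the calculation linear in $\bP[\mathbf{Z}=\mathbf{z}]$.
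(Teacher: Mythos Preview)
Your argument is correct and is exactly the standard proof from the cited reference \cite{Dodis Fuzzy}; the paper itself does not reprove this lemma but simply quotes it. There is nothing to add.
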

\begin{definition}\label{def: AMD}
An $(M,G,\AMDdelta)$-algebraic manipulation detection code, or $(M,G,\delta)$-AMD code for short, is a probabilistic encoding map $\mbox{Enc}: \mathcal{M}\rightarrow \mathcal{G}$ from a set $\mathcal{M}$ of size $M$ to an (additive) group $\mathcal{G}$ of order $G$, together with a deterministic decoding function $\mbox{Dec}: \mathcal{G}\rightarrow \mathcal{M}\bigcup\{\perp\}$ such that $\mbox{Dec}(\mbox{Enc}(\mathbf{m}))=\mathbf{m}$ with probability $1$ for any $\mathbf{m}\in\mathcal{M}$.
 The security of an AMD code requires that for any $\mathbf{m}\in\mathcal{M}$, $\Delta\in\mathcal{G}$, $\bP[\mbox{Dec}(\mbox{Enc}(\mathbf{m})+\Delta)\notin\{\mathbf{m},\perp\}]\leq\AMDdelta$.

\remove{An AMD code is called {\em systematic } if $\mathcal{M}$ is a group, and the encoding is of the form
$$
\mbox{Enc}: \mathcal{M}\rightarrow \mathcal{M}\times \mathcal{G}_1\times\mathcal{G}_2, \mathbf{m}\mapsto(\mathbf{m},\mathbf{r},f(\mathbf{r},\mathbf{m}))
$$
for some (tag) function $f$ and $\mathbf{r}\stackrel{\$}{\leftarrow}\mathcal{G}_1$. The decoding function of a systematic AMD code is naturally given by $\mbox{Dec}(\mathbf{m}^{'},\mathbf{r}^{'},\sigma^{'})=\mathbf{m}^{'}$ if $\sigma^{'}=f(\mathbf{r}^{'},\mathbf{m}^{'})$ and $\perp$ otherwise.}
\end{definition}

The AMD code  above is said to  provide 
{\em strong security}. 
{\em Weak AMD} codes provide security 
for randomly chosen messages. 
Efficiency of  $(M,G,\AMDdelta)$-AMD codes   is measured by the \textit{effective tag size}
which 
is defined as the minimum tag length $\min\{\log_2 G\}-u$, where the minimum is over all $(M,G,\delta)$-AMD codes with $M\geq2^u$. 
Concrete lengths are important in practice, and additionally, 
  the asymptotic  rate 
  (defined as the limit of the ratio of message length to codeword length as the length grows to infinity) of   both weak and strong AMD codes 
  has been  shown \cite{AMD}  to be 
   $1$.

\begin{lemma}\cite{AMD}\label{lem: AMD bounds} Any weak, respectively strong, $(M,G,\delta)$-AMD code satisfies
$$
G\geq\frac{M-1}{\AMDdelta}+1, \mbox{ respectively, } G\geq\frac{M-1}{\AMDdelta^2}+1.
$$
\end{lemma}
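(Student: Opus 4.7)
The plan is to derive both bounds by double counting over the adversary's offset $\Delta\in\mathcal{G}$, applying the security condition for each $\Delta$ and then summing.

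For the weak case the encoder is deterministic, and correct decoding forces $\mbox{Enc}\colon\mathcal{M}\to\mathcal{G}$ to be an injection. Weak security says that for every fixed $\Delta\neq 0$ the number of messages $\mathbf{m}$ admitting some $\mathbf{m}'\neq\mathbf{m}$ with $\mbox{Enc}(\mathbf{m})+\Delta=\mbox{Enc}(\mathbf{m}')$ is at most $\delta M$. Summing this count over $\Delta\in\mathcal{G}\setminus\{0\}$ just reindexes the ordered pairs $(\mathbf{m},\mathbf{m}')$ with $\mathbf{m}\neq\mathbf{m}'$: each such pair contributes exactly once, via the nonzero $\Delta=\mbox{Enc}(\mathbf{m}')-\mbox{Enc}(\mathbf{m})$. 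The left side equals $M(M-1)$ and the right side is at most $\delta M(G-1)$, which rearranges to $G\geq(M-1)/\delta+1$.

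For the strong case let $P_\mathbf{m}$ be the output distribution of $\mbox{Enc}$ on input $\mathbf{m}$ and let $A_\mathbf{m}=\mbox{supp}(P_\mathbf{m})$; correct decoding forces the $A_\mathbf{m}$'s to be pairwise disjoint. The idea is to combine two inequalities. First, fix $\mathbf{m}$ and average the strong-security bound $\sum_{\mathbf{m}'\neq\mathbf{m}}\Pr[\mbox{Enc}(\mathbf{m})+\Delta\in A_{\mathbf{m}'}]\leq\delta$ over all $\Delta\neq 0$. Swapping sums and using $\sum_{\Delta\in\mathcal{G}}\Pr[\mbox{Enc}(\mathbf{m})+\Delta\in A_{\mathbf{m}'}]=|A_{\mathbf{m}'}|$, together with the fact that the $\Delta=0$ contribution vanishes by disjointness, yields $\sum_{\mathbf{m}'\neq\mathbf{m}}|A_{\mathbf{m}'}|\leq(G-1)\delta$.

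Second, I lower bound each $|A_\mathbf{m}|$ by a randomised attack tailored to the strong definition. Fix any $\mathbf{m}'\neq\mathbf{m}$ (assuming $M\geq 2$, otherwise the claim is vacuous), sample $Y\sim P_{\mathbf{m}'}$ and $X'\sim P_\mathbf{m}$ independently using the public encoding distributions, and output $\Delta=Y-X'$. Since $A_\mathbf{m}$ and $A_{\mathbf{m}'}$ are disjoint this $\Delta$ is always nonzero, so strong security applies in expectation over the adversary's coins. Whenever the true encoding $X\sim P_\mathbf{m}$ coincides with $X'$, an event of probability $\sum_x P_\mathbf{m}(x)^2$, one has $X+\Delta=Y\in A_{\mathbf{m}'}$ and the decoder outputs $\mathbf{m}'\notin\{\mathbf{m},\perp\}$. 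Hence $\sum_x P_\mathbf{m}(x)^2\leq\delta$, and Cauchy--Schwarz gives $\sum_x P_\mathbf{m}(x)^2\geq 1/|A_\mathbf{m}|$, so $|A_\mathbf{m}|\geq 1/\delta$. Plugging this into the first inequality gives $(M-1)/\delta\leq(G-1)\delta$, which is the claimed $G\geq(M-1)/\delta^2+1$. The main obstacle is the second step: plain averaging over $\Delta$ alone only recovers the weak-style bound, and extracting the extra factor of $\delta$ requires an attack that (i) is guaranteed to produce a nonzero offset so that strong security still applies and (ii) converts the security slack into a support-size lower bound on $A_\mathbf{m}$; the ``collision against a simulated encoding'' strategy above accomplishes both.
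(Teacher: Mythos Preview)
Your proof is correct. This lemma is not proved in the present paper---it is quoted from \cite{AMD}---so there is no in-paper argument to compare against directly. Your two-step strategy for the strong bound (averaging over $\Delta$ to obtain $\sum_{\mathbf{m}'\neq\mathbf{m}}|A_{\mathbf{m}'}|\le(G-1)\delta$, then a collision-with-simulated-encoding attack to force $|A_\mathbf{m}|\ge 1/\delta$) is essentially the classical argument, and the same random-$\Delta$ averaging idea is precisely what the paper itself reuses when proving Theorem~\ref{lem: generalised upper bound} and Proposition~\ref{pro: weak bound}; there the support lower bound $|\mathcal{C}_\mathbf{m}|\ge q^{\rho n}/\delta$ comes from the leakage hypothesis rather than a collision attack, but the counting step is identical.

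One minor remark on the weak case: you assume a deterministic encoder, which matches how this paper treats weak codes. The bound in fact holds for randomised weak encoders as well, by the same double count: with $A_\mathbf{m}$ the support of $\mbox{Enc}(\mathbf{m})$, one has $\sum_{\mathbf{m}}\sum_{\mathbf{m}'\neq\mathbf{m}}|A_{\mathbf{m}'}|=(M-1)\sum_\mathbf{m}|A_\mathbf{m}|\ge M(M-1)$, and the rest is unchanged.
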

The following construction is optimal with respect to effective tag size. 

\begin{construction}\label{ex: AMD} \cite{AMD}: Let $\mathbb{F}_q$ be a field of size $q$ and characteristic $p$, and let $d$ be any integer such that $d+2$ is not divisible by $p$. Define the encoding function, 
$$
\mbox{Enc}: \mathbb{F}_q^d\rightarrow \mathbb{F}_q^d\times \mathbb{F}_q\times\mathbb{F}_q, \mathbf{m}\mapsto(\mathbf{m},\mathbf{r},f(\mathbf{r},\mathbf{m})),\mbox{ where } f(\mathbf{r},\mathbf{m})=\mathbf{r}^{d+2}+\sum_{i=1}^dm_i\mathbf{r}^i.
$$
The decoder Dec verifies a tagged message $(\mathbf{m},\mathbf{r},t )$ by comparing $t= f(\mathbf{r},\mathbf{m})$
and outputs $\mathbf{m}$ if agree; $\perp$ otherwise. (Enc,Dec) gives a $(q^d,q^{d+2},\frac{d+1}{q})$-AMD code. 
\end{construction}

\begin{definition}[strong LLR-AMD]\cite{LLR-AMD}\label{def: strong LLR-AMD}
A  randomized code with encoding function $\mbox{Enc}:\mathcal{M}\times\mathcal{R}\rightarrow \mathcal{X}$ and decoding function $\mbox{Dec}:\mathcal{X}\rightarrow \mathcal{M}\bigcup\{\perp\}$ is a $(M,X,|\mathcal{R}|,\alpha,\delta)$-strong LLR-AMD code if for any $\mathbf{m}\in\mathcal{M}$ and any $\mathbf{r}\in\mathcal{R}$, $\mbox{Dec}(\mbox{Enc}(\mathbf{m},\mathbf{r}))=\mathbf{m}$, and for any adversary $\mathbb{A}$ and variables $\mathbf{R}\stackrel{\$}{\leftarrow}\mathcal{R}$ and $\mathbf{Z}$ such that $\tilde{\sfH}_\infty(\mathbf{R}|\mathbf{Z})\geq (1-\alpha)\log |\mathcal{R}|$, it holds for any $\mathbf{m}\in\mathcal{M}$:
\begin{equation}\label{eq: strong LLR-AMD}
\bP
[\mbox{Dec}(\mbox{Enc}(\mathbf{m},\mathbf{R})+\mathbb{A}(\mathbf{Z}))\notin\{\mathbf{m},\perp\}]\leq\delta,
\end{equation}
where the probability is over the randomness of encoding.
\end{definition}

\begin{definition}[weak LLR-AMD]\cite{LLR-AMD}\label{def: weak LLR-AMD}
A deterministic code with encoding function $\mbox{Enc}:\mathcal{M}\rightarrow \mathcal{X}$ and decoding function $\mbox{Dec}:\mathcal{X}\rightarrow \mathcal{M}\bigcup\{\perp\}$ is a $(M,X,\alpha,\delta)$-weak LLR-AMD code 
if for any $\mathbf{m}\in\mathcal{M}$, $\mbox{Dec}(\mbox{Enc}(\mathbf{m}))=\mathbf{m}$, 
and for any adversary $\mathbb{A}$ and variables $\mathbf{M}\leftarrow\mathcal{M}$ and $\mathbf{Z}$ such that $\tilde{H}_\infty(\mathbf{M}|\mathbf{Z})\geq (1-\alpha)\log |\mathcal{M}|$, it holds:
\begin{equation}\label{eq: weak LLR-AMD}
\bP
[\mbox{Dec}(\mbox{Enc}(\mathbf{M})+\mathbb{A}(\mathbf{Z}))\notin\{\mathbf{M},\perp\}]\leq\delta,
\end{equation}
where the probability is over the randomness of the message.
\end{definition}
In the above two definitions, leakages are  from randomness (bounded by  $\tilde{\sfH}_\infty(\mathbf{R}|\mathbf{Z})$ $\geq (1-\alpha)\log |\mathcal{R}|$) and message space (bounded by $\tilde{\sfH}_\infty(\mathbf{M}|\mathbf{Z})\geq (1-\alpha)\log |\mathcal{M}|$), respectively.

\subsubsection*{Wiretap II codes.}
Wiretap II model \cite{WtII} of secure communication 
considers a scenario where Alice wants to send messages to Bob over a reliable channel
that is eavesdropped by an adversary,  Eve.
The adversary can read a fraction $\rho$ of the transmitted codeword components, 
 and is allowed to choose 
 any  subset  (the right size)   of their choice.
 A wiretap II code provides information-theoretic secrecy for message transmission against this adversary. 

\begin{definition}\label{def: WtII} 
A $(\rho,\varepsilon)$ wiretap II code, or $(\rho,\varepsilon)$-WtII code for short, is a probabilistic encoding function $\mbox{Enc}: \mathbb{F}_q^k\rightarrow \mathbb{F}_q^n$, together with a deterministic decoding function $\mbox{Dec}: \mathbb{F}_q^n\rightarrow \mathbb{F}_q^k$ such that $\mbox{Dec}(\mbox{Enc}(\mathbf{m}))=\mathbf{m}$ for any $\mathbf{m}\in\mathbb{F}_q^k$. The security of a $(\rho,\varepsilon)$-WtII code requires that for any $\mathbf{m}_0,\mathbf{m}_1\in\mathbb{F}_q^k$, any $S\subset [n]$ of size $|S|\leq n\rho$,
\begin{equation}\label{eq: WtII security}
\SD(\mbox{Enc}(\mathbf{m}_0)_{|S};\mbox{Enc}(\mathbf{m}_1)_{|S})\leq \varepsilon
\end{equation}
A rate $R$ is achievable if there exists a family of $(\rho,\varepsilon)$-WtII codes with encoding and decoding functions $\{\mbox{Enc}_n,\mbox{Dec}_n\}$ such that $\lim_{n\rightarrow\infty}\frac{k}{n}=R$.
\end{definition}
The above definition of security is in line with \cite{AWTP} and is stronger than the original definition  \cite{WtII}, and  also the definition in \cite{invertible extractors}.

\remove{
 \textcolor{red}{The following rate upper bound was proved for wiretap II codes with a weaker security,
 and hence will also be 
  an upper bound for WtII codes with our stronger security definition. The weaker security assumes a uniformly distributed message $\mathbf{M}$ and is defined as $\lim_{k\rightarrow\infty}\frac{H_\infty(\mathbf{M}|\mathbf{Z})}{k}=1$, where $\mathbf{Z}$ is the view of the adversary.}
}

\begin{lemma} \cite{AWTP}\label{lem: WtII upper bound} The achievable rate of $(\rho,0)$-WtII codes is upper bounded by $1-\rho$.
\end{lemma}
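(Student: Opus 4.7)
The plan is to convert the perfect secrecy condition into statistical independence of the adversary's view from the message, and then use a standard chain-rule argument to bound the message entropy by the entropy of the un-eavesdropped coordinates.

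First I would fix a code in the family and take the message $\mathbf{M}$ to be uniform over $\mathbb{F}_q^k$. Let $\mathbf{C} = \mbox{Enc}(\mathbf{M})$, and fix an arbitrary subset $S \subset [n]$ of size $|S| = \lfloor \rho n \rfloor$, writing $\bar{S} = [n] \setminus S$. The first step is to observe that the condition $\varepsilon = 0$ in Definition \ref{def: WtII} means $\mbox{Enc}(\mathbf{m}_0)_{|S}$ and $\mbox{Enc}(\mathbf{m}_1)_{|S}$ have identical distributions for every pair $\mathbf{m}_0, \mathbf{m}_1$. Equivalently, the conditional distribution of $\mathbf{C}_{|S}$ given $\mathbf{M}=\mathbf{m}$ does not depend on $\mathbf{m}$, so $\mathbf{M}$ and $\mathbf{C}_{|S}$ are independent and $I(\mathbf{M};\mathbf{C}_{|S}) = 0$.

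Next I would use the fact that $\mathbf{M}$ is recoverable from $\mathbf{C}$ (so $H(\mathbf{M}\mid\mathbf{C}) = 0$) and apply the chain rule to the mutual information:
\begin{equation*}
k\log q \;=\; H(\mathbf{M}) \;=\; I(\mathbf{M};\mathbf{C}) \;=\; I(\mathbf{M};\mathbf{C}_{|S}) + I(\mathbf{M};\mathbf{C}_{|\bar{S}}\mid \mathbf{C}_{|S}) \;=\; I(\mathbf{M};\mathbf{C}_{|\bar{S}}\mid \mathbf{C}_{|S}).
\end{equation*}
Bounding the remaining term by $H(\mathbf{C}_{|\bar{S}}) \le |\bar{S}|\log q = (n - \lfloor \rho n\rfloor)\log q$ gives $k \le n - \lfloor \rho n\rfloor$. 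Dividing by $n$ and taking the limit $n\to\infty$ in the definition of achievable rate yields $R \le 1-\rho$.

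I do not expect a real obstacle here: the argument is a textbook wiretap II converse. The only point worth checking is that the stronger indistinguishability-style security used in Definition \ref{def: WtII} does indeed imply the independence $I(\mathbf{M};\mathbf{C}_{|S}) = 0$ used above, which it does because $\varepsilon = 0$ collapses statistical distance to equality of distributions.
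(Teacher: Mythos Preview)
The paper does not give its own proof of this lemma; it is simply quoted from \cite{AWTP}. Your argument is correct and is the standard converse for wiretap II with perfect secrecy: $\varepsilon=0$ forces $I(\mathbf{M};\mathbf{C}_{|S})=0$, and the chain rule together with $H(\mathbf{M}\mid\mathbf{C})=0$ then bounds $k\log q$ by $H(\mathbf{C}_{|\bar S})\le (n-\lfloor\rho n\rfloor)\log q$. Since there is no in-paper proof to compare against, there is nothing further to contrast.
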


When $\varepsilon=0$ is achieved in 
(\ref{eq: WtII security}), the distribution of any $\rho$ fraction of the codeword components is independent of the message. This is achieved, for example, by the following construction of wiretap II codes.

\begin{construction}\label{ex: WtII}\cite{WtII} Let $G_{(n-k)\times n}$ be a generator matrix of a $[n,n-k]$ MDS code over $\mathbb{F}_q$. Append $k$ rows to $G$ such that the obtained matrix $\left [\begin{array}{c} G\\\tilde{G}\end{array} \right ]$ is non-singular. Define the encoder WtIIenc as follows.
$$
\mbox{WtIIenc}(\mathbf{m})=[\mathbf{r},\mathbf{m}]\left [\begin{array}{c} G\\\tilde{G}\end{array} \right ],\mbox{ where }\mathbf{r}\stackrel{\$}{\leftarrow}\mathbb{F}_q^{n-k}.
$$
WtIIdec uses a parity-check matrix $H_{k\times n}$ of the MDS code to  first compute the syndrome, $H\mathbf{x}^T$, and then map the syndrome back to the message using the one-to-one correspondence between syndromes and messages. The above construction gives a family of   $(\rho,0)$-WtII codes for  $\rho=\frac{n-k}{n}$.
\end{construction}



\section{AMD codes for leaky storage}\label{sec: rho-AMD}
We consider 
codes over a finite field $\mathbb{F}_q$, where $q$ is a prime power, and  assume 
message set $\mathcal{M}=\mathbb{F}_q^k$ and the storage stores an element of the group 
$\mathcal{G}=\mathbb{F}_q^n$.

\subsection{Definition of \rhoAMD}\label{sec: LV-AMD}
\begin{definition}
An $(n,k)$-coding scheme consists of two functions: a randomized  
{\em encoding function } $\mbox{Enc}:\mathbb{F}_q^k\rightarrow\mathbb{F}_q^n$, 
and deterministic {\em decoding function} $\mbox{Dec}:\mathbb{F}_q^n\rightarrow\mathbb{F}_q^k\cup\{\perp\}$, 
satisfying 
$\bP[\mbox{Dec}(\mbox{Enc}(\mathbf{m}))=\mathbf{m}]=1$, for any $\mathbf{m}\in\mathbb{F}_q^k$.
Here  probability is taken over the randomness of the encoding algorithm. 

The {\em information rate } of an $(n,k)$-coding scheme is $\frac{k}{n}$.
\end{definition}

We now define our leakage model and  codes that detect manipulation in  presence of this leakage.
{ Let   $\mathbf{X}=\mbox{Enc}(\mathbf{m})$ for a message $\mathbf{m}\in {\cal M}$,  and $\mathbb{A}_\mathbf{Z}$ denote an adversary with access to a variable $\mathbf{Z}$, representing the leakage of information about the codeword.

\begin{definition}[\rhoAMD]\label{def: AMD with leakage} An $(n,k)$-coding scheme is called a {\em  strong \rhoAMD code} with security parameter $\delta$ if $\bP[\mbox{Dec}(\mathbb{A}_\mathbf{Z}(\mbox{Enc}(\mathbf{m})))\notin\{\mathbf{m},\perp\}]\leq\delta$ for any message $\mathbf{m}\in\mathbb{F}_q^k$
and  adversary $\mathbb{A}_\mathbf{Z}$ 
whose leakage variable $\mathbf{Z}$  satisfies 
 $\tilde{\sfH}_\infty(\mathbf{X}|\mathbf{Z})\geq \sfH_\infty(\mathbf{X})-\rho n\log q$,
  and  
is allowed to choose any 
offset vector in  $\mathbb{F}_q^n$ to    add to the codeword.

\noindent
The code  is called a {\em weak \rhoAMD code} if 
 security holds for  $\mathbf{M}\stackrel{\$}{\leftarrow}\mathbb{F}_q^k$ (rather than an arbitrary message distribution).
The encoder in this case is  
 deterministic and the probability of outputing a different message is over the randomness of the message.

\noindent
A {\em  family $\{(\mbox{Enc}_n,\mbox{Dec}_n)\}$ of \rhoAMD codes is a set of $(n,k(n))$-coding schemes} indexed by the codeword length $n$, where for any value of $\delta$, there is an $N\in \mathbb{N}$ such that for all $n\geq N$, $(\mbox{Enc}_n,\mbox{Dec}_n)$ is a  \rhoAMD code with security parameter $\delta$.

\noindent A {\em rate $R$ is achievable} if there exists a family $\{(\mbox{Enc}_n,\mbox{Dec}_n)\}$ of \rhoAMD codes such that $\lim_{n\rightarrow\infty}\frac{k(n)}{n}=R$ as $\delta$ approaches $0$
\end{definition}

Our definition bounds the  amount of leakage in comparison with  an adversary who observes up to $\rho n$ components of the
stored codeword.  We call this latter adversary  a  {\em Limited-View (LV) adversary} \cite{ISIT-LV}. 
According to Lemma \ref{lem: conditional min-entropy},  the min-entropy of the stored codeword given  an LV-adversary 
will be   $\tilde{\sfH}_\infty(\mathbf{X}|\mathbf{Z})\geq \sfH_\infty(\mathbf{X})-\rho n\log q$.
We require the same min-entropy be left in the codeword, 
for an arbitrary leakage  variable $\bfZ$ accessible to the adversary. 
}

\remove{

Our definition  captures the intuition that the information leakage, from the storage $\mathcal{G}$ to the adversary $\mathbb{A}$,  is bounded by  
$\rho\log|\mathcal{G}|$. 
 Consider the  setting where the codeword is stored as a vector in $\mathbb{F}_q^n$,
  and the adversary $\mathbb{A}$ can only choose  $\rho n$ components of the vector for 
  eavesdropping.  Lemma \ref{lem: conditional min-entropy} asserts that such adversary satisfies $\tilde{H}_\infty(\mathbf{X}|\mathbf{Z})\geq H_\infty(\mathbf{X})-\rho n\log q$, where $\mathbf{Z}$ is any $\rho$ fraction of codeword components. An adversary with this type of leakage has been called a  {\em Limited-View (LV) adversary} \cite{ISIT-LV}. 

  In Section \ref{sec: LV-AMD construction}, we give constructions of \rhoAMD codes for an LV adversary.
}

\begin{figure}
\centerline{\includegraphics[scale=0.45]{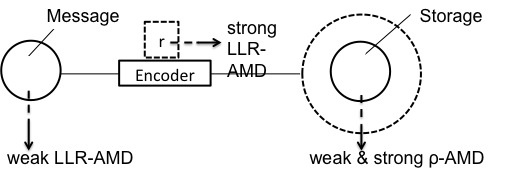}}
\caption{\label{fig: comparison}The arrow shows the part of the system that leaks.}
\end{figure}

 Fig. \ref{fig: comparison}  shows places of  leakage 
 in  AMD encoding  in  
our model, and  the models in Definition \ref{def: strong LLR-AMD} and Definition \ref{def: weak LLR-AMD}. 

\begin{proposition}\label{prop1}
{Let 
$\mathbf{X}$ denote a random variable representing the codeword of a message $\mathbf{m}$ ($\mathbf{M}$ for weak codes), and $\mathbf{Z}$ denote the leakage variable of the 
adversary $\mathbb{A}_\mathbf{Z}$ who uses the leakage information to construct the best offset vector to make the decoder output a different message. 
For a $\rho$-AMD code with security parameter $\delta$, we have $\tilde{H}(\mathbf{X}|\mathbf{Z})\geq\log\frac{1}{ \delta}$. }
\end{proposition}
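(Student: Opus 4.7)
The plan is to prove the bound by contraposition: I would assume that $\tilde{H}_\infty(\mathbf{X}|\mathbf{Z}) < \log\frac{1}{\delta}$ and then exhibit an explicit attacker $\mathbb{A}_\mathbf{Z}$ whose success probability exceeds $\delta$, contradicting the $\delta$-security of the $\rho$-AMD code. Unfolding the definition of conditional min-entropy, the assumption is equivalent to
$$\mathbb{E}_{\mathbf{Z}=\mathbf{z}}\max_{\mathbf{x}}\bP[\mathbf{X}=\mathbf{x}|\mathbf{Z}=\mathbf{z}]\;>\;\delta,$$
so, on average over the leakage, the maximum-likelihood guess for the stored codeword is correct with probability strictly greater than $\delta$.

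The attacker I would construct is the natural ``guess then retarget'' one. On receiving a leakage value $\mathbf{z}$, $\mathbb{A}_\mathbf{Z}$ first computes $\mathbf{x}^{*}(\mathbf{z})=\arg\max_{\mathbf{x}}\bP[\mathbf{X}=\mathbf{x}|\mathbf{Z}=\mathbf{z}]$, then picks any message $\mathbf{m}'\neq\mbox{Dec}(\mathbf{x}^{*}(\mathbf{z}))$ together with a valid encoding $\mathbf{y}$ of $\mathbf{m}'$, and finally outputs the offset $\Delta(\mathbf{z})=\mathbf{y}-\mathbf{x}^{*}(\mathbf{z})$. Whenever the actual codeword $\mathbf{X}$ coincides with the guess $\mathbf{x}^{*}(\mathbf{Z})$, the manipulated storage $\mathbf{X}+\Delta(\mathbf{Z})$ equals $\mathbf{y}$, which decodes to $\mathbf{m}'$ and therefore lies outside $\{\mathbf{m},\perp\}$ (resp. $\{\mathbf{M},\perp\}$ in the weak case). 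The probability of a successful forgery is then at least
$$\mathbb{E}_{\mathbf{Z}=\mathbf{z}}\bP[\mathbf{X}=\mathbf{x}^{*}(\mathbf{z})|\mathbf{Z}=\mathbf{z}]=\mathbb{E}_{\mathbf{Z}=\mathbf{z}}\max_{\mathbf{x}}\bP[\mathbf{X}=\mathbf{x}|\mathbf{Z}=\mathbf{z}]\;>\;\delta,$$
which yields the desired contradiction.

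The only subtlety I anticipate lies in treating the strong and weak variants uniformly. For a strong $\rho$-AMD code the encoded message $\mathbf{m}$ is adversarially chosen and hence known to $\mathbb{A}_\mathbf{Z}$, so one may simply commit to any $\mathbf{m}'\neq\mathbf{m}$ in advance. For a weak $\rho$-AMD code the encoder is deterministic, so $\mathbf{x}^{*}(\mathbf{z})$ uniquely determines a candidate source message $\mbox{Dec}(\mathbf{x}^{*}(\mathbf{z}))$; picking $\mathbf{m}'$ different from this candidate guarantees that a correct guess still produces an illegal decoder output, and the same lower bound on success probability goes through. I do not expect any deeper obstacle: the argument is essentially a one-line reduction from codeword predictability (as measured by $\tilde{H}_\infty(\mathbf{X}|\mathbf{Z})$) to AMD forgery.
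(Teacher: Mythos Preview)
Your proposal is correct and follows essentially the same approach as the paper: both arguments exhibit the ``guess the codeword, then shift to a valid encoding of a different message'' attacker and observe that its success probability is exactly $\mathbb{E}_{\mathbf{z}}\max_{\mathbf{x}}\bP[\mathbf{X}=\mathbf{x}\mid\mathbf{Z}=\mathbf{z}]=2^{-\tilde{\sfH}_\infty(\mathbf{X}|\mathbf{Z})}$. The paper states the bound directly rather than by contraposition, but the content is identical, including the remark that the weak case is handled analogously.
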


\begin{proof}
{
We write the proof 
for strong $\rho$-AMD codes. (The proof for weak \rhoAMD codes follows similarly.) According to the security definition of $\rho$-AMD codes, we have
$$
\mbox{Pr}[\mbox{Dec}(\mathbb{A}_\mathbf{Z}(\mathbf{X}))\notin\{\mathbf{m},\perp\}]\leq\delta,
$$
where the probability is over the randomness of $\mathbf{X}$, and is the  expectation over $\mathbf{z}\in\mathcal{Z}$. 
If the adversary with the leakage variable 
$\mathbf{Z}=\mathbf{z}$ can correctly guess the value $\mathbf{x}$ of $\mathbf{X}$, then a codeword $\mathbf{x}'$ corresponding to another message $\mathbf{m}'$ can be constructed  to cause the decoder to output $\mathbf{m}'$, by using $\mathbb{A}_\mathbf{z}(\mathbf{X})=\mathbf{X}+(\mathbf{x}'-\mathbf{x})$. We then have
$$
\mbox{Pr}[\mbox{Dec}(\mathbb{A}_\mathbf{Z}(\mathbf{X}))\notin\{\mathbf{m},\perp\}|\mathbf{Z}=\mathbf{z}]\geq\max_{\mathbf{x}}\mbox{Pr}[\mathbf{X}=\mathbf{x}|\mathbf{Z}=\mathbf{z}],
$$
which by taking expectation over $\mathbf{z}\in\mathcal{Z}$ yields 
$$
\mathbb{E}_\mathbf{z}\left(\mbox{Pr}[\mbox{Dec}(\mathbb{A}_\mathbf{Z}(\mathbf{X}))\notin\{\mathbf{m},\perp\}|\mathbf{Z}=\mathbf{z}]\right)\geq\mathbb{E}_\mathbf{z}\left(\max_{\mathbf{x}}\mbox{Pr}[\mathbf{X}=\mathbf{x}|\mathbf{Z}=\mathbf{z}]\right)=2^{-\tilde{H}(\mathbf{X}|\mathbf{Z})},
$$
The last equality follows from the definition of conditional min-entropy. The desired inequality then follows directly from the security definition of $\rho$-AMD codes as follows.
$$
2^{-\tilde{H}(\mathbf{X}|\mathbf{Z})}\leq\mbox{Pr}[\mbox{Dec}(\mathbb{A}_\mathbf{Z}(\mathbf{X}))\notin\{\mathbf{m},\perp\}|\mathbf{Z}=\mathbf{z}]\leq \delta\Longleftrightarrow \tilde{H}(\mathbf{X}|\mathbf{Z})\geq\log\frac{1}{ \delta}.
$$
\qed
}

\end{proof}

\begin{definition}
Let  $\cal C$ denote the set of codewords  of a code,  and {${\cal C}_\mathbf{m}$ denote the set of codewords corresponding to the message $\mathbf{m}$, i.e. ${\cal C}_\mathbf{m} = \{ \mbox{Enc}(\mathbf{m},\mathbf{r})| \mathbf{r}\in {\cal R}\} $}.
 A  randomised encoder is  called \textit{regular}  if  $|{\cal C}_\mathbf{m}| =  |{\cal R}|$ for all $\mathbf{m}$. 
\end{definition}
 We note that  because the code has zero decoding error when there is no
 adversary corruption, we have 
\begin{equation}\label{eq_le_10}
 {\cal C}_\mathbf{m}   \cap C_{\mathbf{m}'}  =\emptyset,\; \forall \mathbf{m},\mathbf{m}' \in {\cal M}.
\end{equation}
{This means that for  regular randomised encoders, a codeword  uniquely determines a pair $(\mathbf{m},\mathbf{r})$.} 
Assuming that the randomized encoder uses $r$ uniformly distributed bits,  
 { the random variable {$\mathbf{X}=\mbox{Enc}(\mathbf{m},\mathbf{R})$  is flat over} ${\cal C}_\mathbf{m}$. 

\begin{lemma}\label{lem: strong comparison}
The relations between Strong LLR-AMD codes and strong \rhoAMD codes are as follows.

\begin{enumerate}
\item 
If there exists a regular randomized encoder for a $(q^k,q^n,2^r,\alpha,\delta)$-strong LLR-AMD code, then there is an encoder for strong \rhoAMD code with security parameter $\delta$ and leakage parameter $\rho$ where 
{$\rho\leq\frac{\alpha r}{n\log q}$}. 
\item 
If there exists a regular randomized encoder for a strong \rhoAMD code with security parameter $\delta$ and leakage parameter $\rho$,  then there is an encoder for a $(q^k,q^n,2^r,\alpha,\delta)$-strong LLR-AMD code with $\alpha$ and $r$ where 
{$\alpha\leq\frac{n\rho\log q}{r}$ and $r\geq\log \frac{1}{\delta}+n\rho\log q$.} 
\end{enumerate}

\end{lemma}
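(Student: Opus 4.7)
The plan is to translate between the two leakage constraints via the bijection supplied by regularity, and then to extract the extra lower bound on $r$ in Part~2 from Proposition~\ref{prop1}. For each fixed message $\mathbf{m}$, regularity gives $|{\cal C}_\mathbf{m}|=|{\cal R}|=2^r$, so the map $\mathbf{r}\mapsto\mbox{Enc}(\mathbf{m},\mathbf{r})$ is a bijection and the uniform $\mathbf{R}$ pushes forward to a uniform $\mathbf{X}=\mbox{Enc}(\mathbf{m},\mathbf{R})$ on ${\cal C}_\mathbf{m}$. Consequently $\sfH_\infty(\mathbf{X})=r=\log|{\cal R}|$, and for any jointly distributed leakage $\mathbf{Z}$ the bijection yields the identity $\tilde{\sfH}_\infty(\mathbf{X}|\mathbf{Z})=\tilde{\sfH}_\infty(\mathbf{R}|\mathbf{Z})$. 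This identity is the bridge between the LLR-AMD constraint (entropy of $\mathbf{R}$) and the \rhoAMD constraint (entropy of $\mathbf{X}$).

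For Part~1 I would take any \rhoAMD adversary whose leakage satisfies $\tilde{\sfH}_\infty(\mathbf{X}|\mathbf{Z})\geq r-\rho n\log q$. By the identity above, this is equivalent to $\tilde{\sfH}_\infty(\mathbf{R}|\mathbf{Z})\geq r-\rho n\log q$, and the hypothesis $\rho\leq\alpha r/(n\log q)$ rearranges to $r-\rho n\log q\geq(1-\alpha)r=(1-\alpha)\log|{\cal R}|$, which is precisely the LLR-AMD leakage condition. The LLR-AMD guarantee then transfers the $\delta$-bound to the given adversary. Part~2 is the same argument run in reverse: an LLR-AMD adversary with $\tilde{\sfH}_\infty(\mathbf{R}|\mathbf{Z})\geq(1-\alpha)r$ automatically has $\tilde{\sfH}_\infty(\mathbf{X}|\mathbf{Z})\geq(1-\alpha)r$, and the choice $\alpha\leq n\rho\log q/r$ gives $(1-\alpha)r\geq r-\rho n\log q=\sfH_\infty(\mathbf{X})-\rho n\log q$, so the \rhoAMD security applies and yields the $\delta$-bound.

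The remaining lower bound $r\geq\log(1/\delta)+n\rho\log q$ in Part~2 I would derive directly from Proposition~\ref{prop1}, which forces $\tilde{\sfH}_\infty(\mathbf{X}|\mathbf{Z})\geq\log(1/\delta)$ for every admissible adversary of the \rhoAMD code. To convert this into a bound on $r$, I plan to exhibit an admissible adversary whose leakage actually saturates Lemma~\ref{lem: conditional min-entropy}, i.e.\ satisfies $\tilde{\sfH}_\infty(\mathbf{X}|\mathbf{Z})=r-\rho n\log q$; combining with Proposition~\ref{prop1} then yields $r-\rho n\log q\geq\log(1/\delta)$. The main obstacle is precisely this saturation step, namely producing a concrete leakage function that is tight against the bound of Lemma~\ref{lem: conditional min-entropy} for the distribution of $\mathbf{X}$. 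Regularity is what makes it clean: since $\mathbf{X}$ is uniform on ${\cal C}_\mathbf{m}$ of size $2^r$, I can identify ${\cal C}_\mathbf{m}$ with $\{0,1\}^r$ via any bijection and let $\mathbf{Z}$ be the first $\rho n\log q$ bits of the label. Then $\mathbf{Z}$ is uniform on $\{0,1\}^{\rho n\log q}$, conditional on $\mathbf{Z}=\mathbf{z}$ the variable $\mathbf{X}$ is uniform on a set of size $2^{r-\rho n\log q}$, and equality in Lemma~\ref{lem: conditional min-entropy} follows, completing the argument.
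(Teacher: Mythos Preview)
Your proposal is correct and follows essentially the same approach as the paper: both use regularity to establish the identities $\sfH_\infty(\mathbf{X})=\sfH_\infty(\mathbf{R})=r$ and $\tilde{\sfH}_\infty(\mathbf{X}|\mathbf{Z})=\tilde{\sfH}_\infty(\mathbf{R}|\mathbf{Z})$, then compare the two leakage thresholds to obtain the inequalities relating $\rho$ and $\alpha$, and finally invoke Proposition~\ref{prop1} for the lower bound on $r$. The only difference is that the paper simply asserts ``in particular, we must have $\sfH_\infty(\mathbf{X})-\rho n\log q\geq\log(1/\delta)$'' without exhibiting a saturating $\mathbf{Z}$, whereas you explicitly construct one via the first $\rho n\log q$ bits of a labeling of ${\cal C}_\mathbf{m}$; your treatment of this step is therefore more rigorous than the paper's (modulo the minor integrality issue of $\rho n\log q$, which both versions silently assume away).
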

}
Proof of Lemma \ref{lem: strong comparison} is given in Appendix \ref{sec: proof of strong comparison}.


In \cite{LLR-AMD},  it is shown that the 
 optimal AMD code in Construction \ref{ex: AMD} gives a $(q^d,q^{d+2},q,\alpha,\frac{d+1}{q^{1-\alpha}})$-strong LLR-AMD code. The parameters of this 
  LLR-AMD code are 
  $k=d$, $n=d+2$, $r=\log q$ and $\delta=\frac{d+1}{q^{1-\alpha}}$.
   A simple mathematical manipulation of these equations gives 
$\alpha=1-\log_q \frac{n-1}{\delta}$, and substituting them into 
Lemma \ref{lem: strong comparison}, item 1, we obtain 
$$
\rho\leq\frac{(1-\log_q \frac{n-1}{\delta})\log q}{n\log q }=\frac{1-\log_q \frac{n-1}{\delta}}{n }.
$$
This results in the following.
\begin{corollary}
The code in Construction \ref{ex: AMD} is a strong \rhoAMD code with $k=d$, $n=d+2$, security parameter $\delta$ and leakage parameter  
{$\rho\leq\frac{1-\log_q \frac{n-1}{\delta}}{n }$}.
\end{corollary}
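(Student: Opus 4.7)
The plan is to chain two facts that are already available in the paper: first, the known interpretation of Construction \ref{ex: AMD} as an LLR-AMD code proved in \cite{LLR-AMD}, and second, the conversion between strong LLR-AMD codes and strong $\rho$-AMD codes given in Lemma \ref{lem: strong comparison}, item 1. No fresh probabilistic analysis of the tag polynomial $f(\mathbf{r},\mathbf{m}) = \mathbf{r}^{d+2} + \sum_i m_i \mathbf{r}^i$ is needed, since the $\delta$-security against leakage on the randomness $\mathbf{r}$ is already handled by the $(q^d, q^{d+2}, q, \alpha, \frac{d+1}{q^{1-\alpha}})$-strong LLR-AMD statement.

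Concretely, I would start from the LLR-AMD parameters $k=d$, $n=d+2$, $r=\log q$, $\delta_{\mathrm{LLR}} = \frac{d+1}{q^{1-\alpha}} = \frac{n-1}{q^{1-\alpha}}$ and solve for $\alpha$ in terms of a prescribed security parameter $\delta$. Taking logarithms base $q$ in the relation $\delta = \frac{n-1}{q^{1-\alpha}}$ yields $1-\alpha = \log_q \frac{n-1}{\delta}$, so $\alpha = 1 - \log_q \frac{n-1}{\delta}$. Plugging this $\alpha$, together with $r = \log q$, into the bound $\rho \leq \frac{\alpha r}{n\log q}$ of Lemma \ref{lem: strong comparison} (item 1) gives exactly
\[
\rho \leq \frac{(1 - \log_q \frac{n-1}{\delta})\,\log q}{n \log q} = \frac{1 - \log_q \frac{n-1}{\delta}}{n},
\]
which is the claimed leakage parameter.

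The only nontrivial prerequisite for invoking Lemma \ref{lem: strong comparison} is that the encoder of Construction \ref{ex: AMD} is \emph{regular} in the sense of the definition preceding the lemma, i.e.\ that $|\mathcal{C}_\mathbf{m}| = |\mathcal{R}|$ for every message $\mathbf{m}$. This is essentially a syntactic check: the encoder maps $\mathbf{m} \mapsto (\mathbf{m}, \mathbf{r}, f(\mathbf{r}, \mathbf{m}))$ with $\mathbf{r} \stackrel{\$}{\leftarrow} \mathbb{F}_q$, and distinct values of $\mathbf{r}$ produce distinct second coordinates, hence distinct codewords; therefore $|\mathcal{C}_\mathbf{m}| = q = |\mathcal{R}|$ for every $\mathbf{m}$, so regularity holds and Lemma \ref{lem: strong comparison} applies verbatim.

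I do not anticipate a real obstacle here; the corollary is essentially a parameter-chasing observation. The one place where care is needed is ensuring that the value of $\alpha$ derived from the security equation is genuinely in $[0,1)$ (otherwise the LLR-AMD statement from \cite{LLR-AMD} is vacuous), which amounts to requiring $\delta \geq (n-1)/q$; this range is implicit in the statement since otherwise the numerator of the bound on $\rho$ becomes non-positive and the corollary carries no content.
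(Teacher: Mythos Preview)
Your proposal is correct and follows exactly the paper's approach: it cites the $(q^d,q^{d+2},q,\alpha,\frac{d+1}{q^{1-\alpha}})$-strong LLR-AMD result from \cite{LLR-AMD}, solves $\delta=\frac{n-1}{q^{1-\alpha}}$ for $\alpha$, and substitutes into Lemma~\ref{lem: strong comparison}, item~1. Your explicit verification of regularity is a small addition the paper leaves implicit, but otherwise the argument is identical.
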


It is easy to see that $\rho<\frac{1}{n}$. 
Thus the resulting construction of strong \rhoAMD codes can only tolerate a 
very small leakage. 
Moreover the upper bound on $\rho$ vanishes as $n$ goes to infinity and so this construction cannot give a 
non-trivial family of 
strong \rhoAMD code. 
We note that the same construction resulted in  a family of strong LLR-AMD codes  with asymptotic rate $1$. 

\begin{lemma}\label{lem: weak comparison} 
The relations between weak LLR-AMD codes and weak \rhoAMD codes are 
as follows.
\begin{enumerate}
\item 
A $(q^k,q^n,\alpha,\delta)$-weak LLR-AMD code is a weak \rhoAMD code with security parameter $\delta$ and leakage parameter $\rho$ satisfying $\rho\leq\frac{\alpha k}{n}$.  
\item 
A weak \rhoAMD code with security parameter $\delta$ and leakage parameter $\rho$ is a $(q^k,q^n,\alpha,\delta)$-weak LLR-AMD code satisfying $\alpha\leq\frac{\rho n}{ k}$.
\end{enumerate}
\end{lemma}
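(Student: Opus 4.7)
The plan is to reduce both implications to a one-line comparison of min-entropy thresholds, exploiting that the weak setting is far simpler than the strong case of Lemma~\ref{lem: strong comparison}. The crucial observation is that for a weak code the encoder $\mbox{Enc}$ is deterministic and, by the zero-error decodability condition $\mbox{Dec}(\mbox{Enc}(\mathbf{m}))=\mathbf{m}$, injective. Hence the codeword $\mathbf{X}=\mbox{Enc}(\mathbf{M})$ and the message $\mathbf{M}$ are in bijection, and for any leakage variable $\mathbf{Z}$,
$$
\tilde{H}_\infty(\mathbf{X}|\mathbf{Z}) \;=\; \tilde{H}_\infty(\mathbf{M}|\mathbf{Z}),
$$
while for uniformly distributed $\mathbf{M}$, $H_\infty(\mathbf{X})=H_\infty(\mathbf{M})=k\log q$. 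No regularity hypothesis is needed; it is forced automatically by determinism and injectivity.

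Substituting this identity into the weak \rhoAMD leakage bound $\tilde{H}_\infty(\mathbf{X}|\mathbf{Z}) \geq H_\infty(\mathbf{X}) - \rho n\log q$ rewrites it as
$$
\tilde{H}_\infty(\mathbf{M}|\mathbf{Z}) \;\geq\; (k - \rho n)\log q,
$$
whereas the weak LLR-AMD bound reads $\tilde{H}_\infty(\mathbf{M}|\mathbf{Z}) \geq (1-\alpha)k\log q$. Both games then fix the same uniform message distribution and quantify the adversary only through a lower bound on $\tilde{H}_\infty(\mathbf{M}|\mathbf{Z})$, so each containment boils down to comparing the two thresholds.

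For item~1 I would verify that whenever $\rho \leq \alpha k/n$, i.e.\ $(k-\rho n)\log q \geq (1-\alpha)k\log q$, every leakage variable admissible in the weak \rhoAMD game is already admissible in the weak LLR-AMD game, so the LLR-AMD bound $\delta$ applies to the \rhoAMD adversary without any loss. For item~2 the inequality is flipped: if $\alpha \leq \rho n/k$, then the LLR-AMD admissibility condition implies the \rhoAMD one, so the \rhoAMD security guarantee transfers to LLR-AMD with the same $\delta$.

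There is no real obstacle here. The only non-routine step is invoking the bijectivity of $\mbox{Enc}$ to equate the two conditional min-entropies; after that, the lemma is purely algebraic. This also pinpoints why the weak case is cleaner than Lemma~\ref{lem: strong comparison}: in the strong setting the encoder's independent randomness $\mathbf{R}$ contributes $r=\log|\mathcal{R}|$ bits of entropy to $\mathbf{X}$ that must be tracked separately, whereas here $H_\infty(\mathbf{X})$ is determined entirely by $\mathbf{M}$ and the two thresholds can be compared directly.
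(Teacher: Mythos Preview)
Your proposal is correct and follows essentially the same approach as the paper's own proof: both hinge on the bijectivity of the deterministic encoder to obtain $\tilde{H}_\infty(\mathbf{X}|\mathbf{Z})=\tilde{H}_\infty(\mathbf{M}|\mathbf{Z})$ and $H_\infty(\mathbf{X})=H_\infty(\mathbf{M})=k\log q$, then reduce each direction to comparing the thresholds $(k-\rho n)\log q$ and $(1-\alpha)k\log q$. The paper wraps this in explicit adversary classes $\mathbb{A}_Z$, $\mathbb{B}_Z$ mirroring its proof of Lemma~\ref{lem: strong comparison}, but the content is identical to your threshold-containment argument.
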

Proof of Lemma \ref{lem: weak comparison} is given in Appendix \ref{sec: proof of weak comparison}.

 A construction of $(q^d,q^{d+1},\alpha,\frac{2}{q^{1-\alpha d}})$ weak LLR-AMD codes is given in \cite[Theorem 2]{LLR-AMD}. 
 The 
 code has parameters $k=d$, $n=d+1$ and $\delta=\frac{2}{q^{1-\alpha d}}$. A simple mathematical manipulation of these equations gives $\alpha=\frac{1-\log_q \frac{2}{\delta}}{n-1}$, and so from  
Lemma \ref{lem: weak comparison}, item 1, we obtain 
$$
\rho\leq\frac{(\frac{1-\log_q \frac{2}{\delta}}{n-1})(n-1)}{n}=\frac{1-\log_q \frac{2}{\delta}}{n}.
$$
\begin{corollary}
The code 
 in \cite[Theorem 2]{LLR-AMD} is a weak \rhoAMD code with $k=d$, $n=d+1$, security parameter $\delta$ and leakage parameter 
 $\rho\leq\frac{1-\log_q \frac{2}{\delta}}{n}$.
\end{corollary}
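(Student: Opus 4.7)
The plan is to apply Lemma \ref{lem: weak comparison}, item 1, to the construction of weak LLR-AMD codes from \cite[Theorem 2]{LLR-AMD} whose stated parameters are $(q^d, q^{d+1}, \alpha, \frac{2}{q^{1-\alpha d}})$. Everything here is a routine algebraic inversion, exactly parallel to the preceding Corollary for strong codes; there is no real obstacle beyond keeping the substitutions straight.

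First I would fix the identification $k = d$ and $n = d+1$, and express the LLR-AMD leakage parameter $\alpha$ in terms of the security parameter $\delta$. Starting from $\delta = \frac{2}{q^{1-\alpha d}}$, taking $\log_q$ on both sides gives $1 - \alpha d = \log_q \frac{2}{\delta}$, so
\[
\alpha \;=\; \frac{1 - \log_q \frac{2}{\delta}}{d} \;=\; \frac{1 - \log_q \frac{2}{\delta}}{n-1}.
\]
This is the same kind of inversion that was carried out for the strong case just above the statement, only with exponent $1-\alpha d$ instead of $1-\alpha$.

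Next I would plug this $\alpha$ into the bound from Lemma \ref{lem: weak comparison}, item 1, which asserts that a $(q^k, q^n, \alpha, \delta)$-weak LLR-AMD code is a weak \rhoAMD code with security $\delta$ and leakage
\[
\rho \;\leq\; \frac{\alpha k}{n} \;=\; \frac{\alpha (n-1)}{n}.
\]
Substituting the expression for $\alpha$ obtained in the previous step, the factor $n-1$ cancels, and we are left with
\[
\rho \;\leq\; \frac{n-1}{n}\cdot\frac{1 - \log_q \frac{2}{\delta}}{n-1} \;=\; \frac{1 - \log_q \frac{2}{\delta}}{n},
\]
which is exactly the claimed bound. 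The corollary then follows directly, and the rest (zero-error decoding on clean codewords, the weak security on the uniform message distribution) is inherited verbatim from the LLR-AMD construction, so no new verification is needed.
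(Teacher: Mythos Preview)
Your proposal is correct and follows essentially the same approach as the paper: invert $\delta=\frac{2}{q^{1-\alpha d}}$ to get $\alpha=\frac{1-\log_q\frac{2}{\delta}}{n-1}$, then apply Lemma~\ref{lem: weak comparison}, item~1, with $k=n-1$ so that the factor $n-1$ cancels. The paper presents this computation in the paragraph immediately preceding the corollary, so your argument matches it step for step.
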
 
This construction   gives \rhoAMD codes with small $\rho$,  and cannot be used to construct 
a family of \rhoAMD codes for $\rho>0$.



\subsection{Efficiency bounds for \rhoAMD codes}\label{sec: upper bound}

\begin{theorem}\label{lem: generalised upper bound} If an $(n,k)$-coding scheme is a strong \rhoAMD code with security parameter $\delta$, then,
 
\begin{equation}
k\leq n(1-\rho)+\frac{2\log \delta-1}{\log q}.
\end{equation}
The achievable rate of strong \rhoAMD codes is upper bounded by $1-\rho$.
\end{theorem}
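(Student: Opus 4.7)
Proof plan:

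The plan is to combine two complementary attack strategies against a (WLOG regular) strong $\rho$-AMD encoder: a guess-and-shift attack that invokes Proposition~\ref{prop1} to lower-bound the encoder randomness, and a uniformly random offset attack that upper-bounds the number of messages in terms of that randomness. Multiplying the two bounds gives the concrete inequality, and the asymptotic rate statement is then immediate.

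First I would reduce to a regular encoder, so that $|\mathcal{C}_\mathbf{m}|=R$ for every $\mathbf{m}\in\mathbb{F}_q^k$ and $\mathbf{X}=\mbox{Enc}(\mathbf{m})$ is uniform on $\mathcal{C}_\mathbf{m}$; in particular $H_\infty(\mathbf{X})=\log R$. I then choose a subgroup $H\leq\mathbb{F}_q^n$ of index $q^{n\rho}$ (rounding up when $n\rho$ is not integral) and take the leakage $\mathbf{Z}$ to be the coset of $\mathbf{X}$ modulo $H$; since $|\mathcal{Z}|=q^{n\rho}$, Lemma~\ref{lem: conditional min-entropy} says this $\mathbf{Z}$ is admissible, and for a subgroup chosen so that the cosets split $\mathcal{C}_\mathbf{m}$ evenly the entropy bound is saturated, giving $\tilde{H}_\infty(\mathbf{X}|\mathbf{Z})=\log R-n\rho\log q$. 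Proposition~\ref{prop1} forces $\tilde{H}_\infty(\mathbf{X}|\mathbf{Z})\geq\log(1/\delta)$, yielding $R\geq q^{n\rho}/\delta$.

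For the complementary direction I would consider the adversary who ignores $\mathbf{Z}$ and outputs $\Delta\in\mathbb{F}_q^n$ uniformly at random. Then $\mathbf{X}+\Delta$ is uniform on $\mathbb{F}_q^n$, so its probability of lying in $\mbox{Dec}^{-1}(\mathcal{M}\setminus\{\mathbf{m}\})$ is at least $(M-1)R/q^n$, because each of the other $M-1$ messages has at least $R$ decoder preimages. Strong $\rho$-AMD security requires this probability to be at most $\delta$, yielding $(M-1)R\leq q^n\delta$. Combining with the previous bound gives $M-1\leq q^n\delta/R\leq q^{n(1-\rho)}\delta^2$, which after taking logarithms yields the stated concrete inequality (up to the constant $-1$ in the numerator, which is absorbed by a careful accounting of $+1$ terms). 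Letting $n\to\infty$ with $\delta$ and $\rho$ fixed gives $k/n\leq(1-\rho)+O(1/n)$, so any achievable rate is at most $1-\rho$.

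The main technical obstacle I anticipate is the reduction to a regular encoder in the first step; without it, the random-offset attack only produces a bound involving $\min_{\mathbf{m}'}|\mathcal{C}_{\mathbf{m}'}|$, and one has to argue by averaging that this minimum is essentially $R$. A secondary point is the choice of the subgroup $H$ so that the coset leakage actually saturates Lemma~\ref{lem: conditional min-entropy}; this is automatic for balanced codes and otherwise can be arranged by a random subgroup (or random linear projection) argument.
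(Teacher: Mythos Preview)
Your two-attack strategy---Proposition~\ref{prop1} to lower-bound the per-message codeword set, then a uniformly random offset to upper-bound the number of messages---is exactly the paper's argument. The difference is that you have introduced two ``technical obstacles'' that do not actually arise.

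First, the reduction to a regular encoder is unnecessary. Proposition~\ref{prop1} applies to $\mathbf{X}=\mbox{Enc}(\mathbf{m})$ for \emph{each} message $\mathbf{m}$ separately, and combined with the trivial bound $\log|\mathcal{C}_\mathbf{m}|\geq H_\infty(\mathbf{X})$ it already yields $|\mathcal{C}_{\mathbf{m}}|\geq q^{\rho n}/\delta$ for every $\mathbf{m}$. This is precisely what the random-offset step needs: the success probability is at least $|\bigcup_{\mathbf{m}'\neq\mathbf{m}}\mathcal{C}_{\mathbf{m}'}|/(q^n-1)\geq (q^k-1)\cdot(q^{\rho n}/\delta)/(q^n-1)$, with no appeal to a common value $R$ or any averaging over $\min_{\mathbf{m}'}|\mathcal{C}_{\mathbf{m}'}|$.

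Second, you do not need to exhibit a concrete leakage $\mathbf{Z}$ (subgroup cosets, balanced splitting, etc.) that saturates the admissibility constraint. Since Proposition~\ref{prop1} gives $\tilde H_\infty(\mathbf{X}|\mathbf{Z})\geq\log(1/\delta)$ for \emph{all} admissible $\mathbf{Z}$, and admissibility only requires $\tilde H_\infty(\mathbf{X}|\mathbf{Z})\geq H_\infty(\mathbf{X})-\rho n\log q$, it follows directly that $H_\infty(\mathbf{X})-\rho n\log q\geq\log(1/\delta)$ (otherwise any $\mathbf{Z}$ with conditional min-entropy in the gap---which always exists by progressively revealing $\mathbf{X}$---would be admissible yet violate Proposition~\ref{prop1}). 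The paper simply asserts this implication without constructing $\mathbf{Z}$ explicitly.
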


\begin{proof}  
Consider a strong \rhoAMD code with security parameter $\delta$.
By Proposition \ref{prop1}, $\tilde{H}_\infty(\bfX |\bfZ)\geq\log\frac{1}{\delta}$ should hold for any $\mathbf{Z}$ satisfying $\tilde{H}_\infty(\bfX |\bfZ)\geq H_\infty( \bfX) -\rho n\log q$. In particular, the inequality should hold for $\mathbf{Z}$ such that $\tilde{H}_\infty(\bfX |\bfZ)= H_\infty( \bfX) -\rho n\log q$. We then have $H_\infty( \bfX) -\rho n\log q\geq\log\frac{1}{\delta}$. On the other hand, we always have $\log |{\cal C}_\mathbf{m}|\geq H_\infty( \bfX)$, where ${\cal C}_\mathbf{m}$ denotes the set of codewords corresponding to message $\mathbf{m}$, which is the support of $\bfX$.
This gives the following lower bound on $|{\cal C}_\mathbf{m}|$. 
  

\begin{equation}\label{eq: coset bound}
{|\cal C}_\mathbf{m}| \geq \frac{2^{\rho n \log q}}{\delta} = \frac{q^{\rho n}}{\delta}.
\end{equation}
Now consider the adversary randomly choose an offset $\Delta\neq 0^n$, we have 

\begin{equation}
\begin{split}
\delta & \geq \bP[\mbox{Enc}(\bfm) + \Delta \in \cup_{\bfm' \neq \bfm} E_{\bfm'} ]\\
&\geq \frac{| \bigcup_{\bfm' \neq \bfm} E_{\bfm'} |}{|\mathbb{F}_q^n| - 1}\\
& \overset{(\ref{eq_le_10}),(\ref{eq: coset bound})}{\geq} \frac{(q^{k}-1)\cdot \frac{q^{\rho n}}{\delta}}{q^{n}-1}. \\
\end{split}
\end{equation}

Therefore,
$$
k\leq n(1-\rho)+\frac{2\log \delta-1}{\log q}.
$$
\qed
\end{proof}

\begin{proposition}\label{pro: weak bound}
If an $(n,k)$-coding scheme is a weak \rhoAMD code with security parameter $\delta$, then $q^{\rho n-k}\leq\delta$ and $\frac{q^{k}-1}{q^{n}-1}\leq \delta$.
\end{proposition}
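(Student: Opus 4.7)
The plan is to prove the two inequalities separately, both leveraging the structural observation that for a weak $\rho$-AMD code the encoder $\mbox{Enc}$ is deterministic, and zero-error decoding forces it to be injective. Hence, for $\mathbf{M} \stackrel{\$}{\leftarrow} \mathbb{F}_q^k$, the induced random codeword $\mathbf{X} = \mbox{Enc}(\mathbf{M})$ is uniform on a set of size $q^k$, and $\sfH_\infty(\mathbf{X}) = k \log q$.

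For the first inequality $q^{\rho n - k} \leq \delta$, I would mirror the proof of Theorem~\ref{lem: generalised upper bound} in the weak setting. Proposition~\ref{prop1} (whose stated remark extends it to weak codes) asserts that any admissible leakage $\mathbf{Z}$ must satisfy $\tilde{\sfH}_\infty(\mathbf{X}|\mathbf{Z}) \geq \log(1/\delta)$. The key step is to exhibit a specific admissible $\mathbf{Z}$ that saturates the leakage constraint, i.e., achieves $\tilde{\sfH}_\infty(\mathbf{X}|\mathbf{Z}) = \sfH_\infty(\mathbf{X}) - \rho n \log q = (k - \rho n)\log q$. A natural construction is to partition the codeword support into $q^{\rho n}$ equal-size buckets (using a randomised partition when $\rho n$ is not an integer) and let $\mathbf{Z}$ be the bucket label; conditioned on $\mathbf{Z}$, $\mathbf{X}$ is uniform over a bucket of size $q^{k-\rho n}$, delivering exactly the target conditional min-entropy. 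Substituting into Proposition~\ref{prop1} gives $(k-\rho n)\log q \geq \log(1/\delta)$, which rearranges to $q^{\rho n - k} \leq \delta$.

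For the second inequality $\frac{q^k - 1}{q^n - 1} \leq \delta$, I would exhibit a concrete leakage-free adversary: let $\mathbb{A}$ add a uniformly random nonzero offset $\Delta \in \mathbb{F}_q^n \setminus \{\mathbf{0}\}$, ignoring its leakage (so $\mathbf{Z}$ can be taken constant, and admissibility is trivial). For any fixed $\mathbf{m}$, the attack succeeds exactly when $\Delta$ coincides with one of the differences $\mbox{Enc}(\mathbf{m}') - \mbox{Enc}(\mathbf{m})$ for $\mathbf{m}' \neq \mathbf{m}$; by injectivity of $\mbox{Enc}$ these are $q^k - 1$ distinct nonzero vectors, so the conditional success probability is $\frac{q^k - 1}{q^n - 1}$, independent of $\mathbf{m}$. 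Averaging over $\mathbf{M}$ uniform and applying the weak $\rho$-AMD guarantee yields the bound.

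The main delicate point in the whole argument is the first part: specifying a saturating $\mathbf{Z}$ cleanly when $\rho n$ is not an integer. This can be handled either by letting $\mathbf{Z}$ be probabilistic or by a careful fractional partitioning, but either way it does not affect the final inequality. The second part is essentially the standard weak-AMD random-offset bound and should go through routinely.
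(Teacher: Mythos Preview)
Your proposal is correct and matches the paper's proof essentially step for step: both parts invoke Proposition~\ref{prop1} together with $\sfH_\infty(\mathbf{X})=k\log q$ (from injectivity of the deterministic encoder on a uniform message) for the first inequality, and the uniformly random nonzero offset attack for the second. Your explicit construction of a leakage variable $\mathbf{Z}$ saturating the constraint, and your remark on the non-integer $\rho n$ case, are more careful than the paper, which simply asserts that a $\mathbf{Z}$ with $\tilde{\sfH}_\infty(\mathbf{X}|\mathbf{Z})=\sfH_\infty(\mathbf{X})-\rho n\log q$ exists; otherwise the arguments coincide.
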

Proof of Proposition \ref{pro: weak bound} is given in Appendix \ref{sec: proof of weak bound}.

\section{Limited-View \rhoAMD codes} 
\label{sec: LV-AMD construction}

\remove{
In this subsection, we give code constructions for LV adversary, namely, an adversary who reads a $\rho$ fraction of codeword components and tampers according to what he sees. 
Now we first make the intuition that ``an adaptive adversary tampers according to what he sees'' exact. We will take the following approach. Firstly, an adversary is capable of a set of manipulation strategies, each of which is described by a function. Secondly, the adversary is worst-case adversary, namely, he will always apply his best strategy to manipulate. So a code is secure with respect to the adversary means the code is secure against all manipulation strategies the adversary is capable of.  
}

We consider a special type of leakage where the adversary chooses a subset $S, |S| = \rho n $ ($n$ is the codeword length),  and the codeword components associated with this set will be revealed to them. The adversary will then use this information to construct  their
offset vector. 
A tampering strategy is a function from $\mathbb{F}_q^{n}$ to $\mathbb{F}_q^{n}$ which can be described by  the  notation $f_{S,g}$, where $S\subset [n]$  and a function
$g:\mathbb{F}_q^{n\rho}\rightarrow\mathbb{F}_q^{n}$, with the following interpretation. The set S  specifies a  subset of $\rho n$  indexes  of the codeword that the adversary choose. The function $g$ determines an  offset for each read value on the subset $S$.  A \LVAMD code provides protection against all adversary strategies. 
(This approach to defining tampering functions is inspired by Non-Malleable Codes (NMC) \cite{DzPiWi}.)

\remove{
A {\em tampering strategy $f_{S,g}$} is determined by a  {\em read set $S\subset [n]$} and a function $g:\mathbb{F}_q^{n\rho}\rightarrow\mathbb{F}_q^{n}$. The set $S\subset [n]$ specifies 
$n\rho$ positions of the codeword that the adversary has chosen. 
The function $g:\mathbb{F}_q^{n\rho}\rightarrow\mathbb{F}_q^{n}$ specifies 
the  chosen offset of the adversary, after the codeword components associated with $S$ are seen by the adversary. \textcolor{blue}{The set of all tampering strategies characterizes the \LVAMD adversary in that if a coding scheme guarantees $\delta$-security for any tampering strategy $f_{S,g}$, then it guarantees $\delta$-security against the \LVAMD adversary. This way of characterizing an adversary is inspired by 
Non-Malleable Codes (NMC) \cite{DzPiWi}.}

}

Let $\mathcal{S}^{[n\rho]}$ be the set of all subsets of $[n]$ of size $n\rho$. Let $\mathcal{M}(\mathbb{F}_q^{n\rho},\mathbb{F}_q^{n})$ denote the set of all functions from $\mathbb{F}_q^{n\rho}$ to $\mathbb{F}_q^{n}$, namely, $\mathcal{M}(\mathbb{F}_q^{n\rho},\mathbb{F}_q^{n}):=\{g:\mathbb{F}_q^{n\rho}\rightarrow\mathbb{F}_q^{n}\}$.

\begin{definition}[$\mathcal{F}^{add}_{\rho}$]
The class of tampering function $\mathcal{F}^{add}_{\rho}$, consists of the set of functions $\mathbb{F}_q^n\rightarrow\mathbb{F}_q^n$, that can be described by two parameters,  $S\in\mathcal{S}^{[n\rho]}$ and $g\in\mathcal{M}(\mathbb{F}_q^{n\rho},\mathbb{F}_q^{n})$.
The set $\mathcal{F}^{add}_{\rho}$ of limited view algebraic tampering functions are defined as follows.
\begin{equation}\label{eq: AMD with leakage}
\mathcal{F}^{add}_{\rho}=\left\{f_{S,g}(\mathbf{x})\ |\ S\in\mathcal{S}^{[n\rho]},g\in\mathcal{M}(\mathbb{F}_q^{n\rho},\mathbb{F}_q^{n})\right\},
\end{equation}
where $f_{S,g}(\mathbf{x})=\mathbf{x}+ g(\mathbf{x}_{|S})$ for $\mathbf{x}\in\mathbb{F}_q^{n}$.
\end{definition}

\remove{
\begin{definition} \textcolor{blue}{Let $\mathcal{S}^{[n\rho]}$ be the set of all subsets of $[n]$ of size $n\rho$. Let ${\mathbb{F}_q^{n\rho}}^{\mathbb{F}_q^{n}}$ be the set of all functions from $\mathbb{F}_q^{n\rho}$ to $\mathbb{F}_q^{n\rho}$, namely, ${\mathbb{F}_q^{n\rho}}^{\mathbb{F}_q^{n}}:=\{g:\mathbb{F}_q^{n\rho}\rightarrow\mathbb{F}_q^{n}\}$.}
The set of limited-view \rhoAMD tampering functions are defined as follows.
\begin{equation}\label{eq: AMD with leakage}
\mathcal{F}^{add}_{\rho}=\{f_{S,g}(\mathbf{x})=\mathbf{x}+ g(\mathbf{x}_{|S})\ |\ S\in\mathcal{S}^{[n\rho]},g\in{\mathbb{F}_q^{n\rho}}^{\mathbb{F}_q^{n}}\}.
\end{equation}
\end{definition}
}

\begin{definition}[\LVAMD]\label{def: LV-AMD codes} An $(n,k)$-coding scheme is called a strong \LVAMD code with security parameter $\delta$ if $\mbox{Pr}[\mbox{Dec}(f(\mbox{Enc}(\mathbf{m})))\notin\{\mathbf{m},\perp\}]\leq\delta$ for any message $\mathbf{m}\in\mathbb{F}_q^k$ and any $f_{S,g}\in\mathcal{F}^{add}_{\rho}$. It is called a weak \LVAMD code if it only requires the security to hold for a random message $\mathbf{M}\leftarrow\mathbb{F}_q^k$ rather than an arbitrary message $\mathbf{m}$.
\end{definition}


We first give a generic construction of strong \LVAMD codes from WtII codes and AMD codes.
\begin{construction}\label{con: basic construction}
Let $(\mbox{AMDenc},\mbox{AMDdec})$ be a $(q^k,q^{n'},\delta)$-AMD code and let $(\mbox{WtIIenc}, \mbox{WtIIdec})$ be a linear  $(\rho,0)$-wiretap II code with encoder $\mbox{WtIIenc}:\mathbb{F}_q^{n'}\rightarrow\mathbb{F}_q^n$. Then $(\mbox{Enc},\mbox{Dec})$ defined as follows is a strong \LVAMD codes with security parameter $\delta$.
$$
\left\{
\begin{array}{ll}
\mbox{Enc}(\mathbf{m})&=\mbox{WtIIenc}(\mbox{AMDenc}(\mathbf{m}));\\
\mbox{Dec}(\mathbf{x})&=\mbox{AMDdec}(\mbox{WtIIdec}(\mathbf{x})).\\
\end{array}
\right.
$$
When instantiated with the $(q^k,q^{k+2},\frac{k+1}{q})$-AMD code in Construction \ref{ex: AMD} and
the linear  $(\rho,0)$-wiretap II code in Construction \ref{ex: WtII}, we obtain a family of strong \LVAMD codes with security parameter $\frac{k+1}{q}$ that achieves rate $1-\rho$.
\end{construction}


\begin{figure}
\centerline{\includegraphics[scale=0.45]{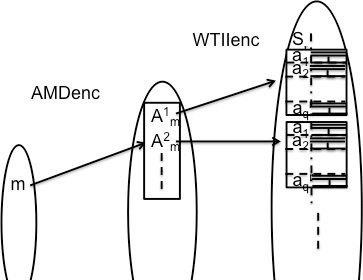}}
\caption{\label{fig: tampering experiment}WtII$\circ$AMD construction with $A^i_\mathbf{m}$ denoting the values of $\mbox{AMDenc}(\mathbf{m})$}
\end{figure}

\begin{proof} 
Since both AMDenc and WtIIenc are randomised encoders, in this proof we write the randomness of a randomized encoder explicitly. Let $I$ denote the randomness of AMDenc and let $J$ denote the randomness of WtIIenc.  
As illustrated in Fig. \ref{fig: tampering experiment}, a message $\bfm$ is first encoded into an AMD codeword $A^I_\bfm=\mbox{AMDenc}(\bfm,I)$. The AMD codeword $A^I_\bfm$ is then further encoded into a WtII codeword, which is the final \LVAMD codeword: $\mbox{Enc}(\bfm)=\mbox{WtIIenc}(A^I_\bfm,J)$.  
According to (\ref{eq: WtII security}), 
$\mbox{SD}\left(\mbox{WtIIenc}(A^{i_1}_\bfm,J)_{|S};\mbox{WtIIenc}(A^{i_2}_\bfm,J)_{|S}\right)=0$. This says that $A^\mathbf{I}_\bfm$ and $\mbox{Enc}(\mathbf{m})_{|S}$ are independent random variables, 
in particular, $\mathbf{I}$ and $\mbox{Enc}(\mathbf{m})_{|S}$ are independent. 
According to Definition \ref{def: LV-AMD codes}, to show that $(\mbox{Enc},\mbox{Dec})$ is a strong \LVAMD code with security parameter $\delta$, we need to show that for any message $\bfm$, and any $f_{S,g}\in\mathcal{F}^{add}_{\rho}$,  $\mbox{Pr}[\mbox{Dec}(f_{S,g}(\mbox{Enc}(\mathbf{m})))\notin \{\mathbf{m},\perp\}] \leq \delta$, where the probability is over the randomness ($\mathbf{I},\mathbf{J}$) of the encoder Enc. We show this in two steps.

\textbf{Step 1.} In this step, we assume that $\mbox{Enc}(\mathbf{m})_{|S}= \bfa$ has occurred and bound the error probability of (Enc,Dec) under this condition. 
We compute
$$
\begin{array}{l}
\mbox{Pr}[\mbox{Dec}(f_{S,g}(\mbox{Enc}(\mathbf{m})))\notin \{\mathbf{m},\perp\}|(\mbox{Enc}(\mathbf{m})_{|S}=\mathbf{a})]\\
=\mbox{Pr}[\mbox{Dec}(\mbox{Enc}(\mathbf{m})+g(\mathbf{a}))\notin \{\mathbf{m},\perp\}|(\mbox{Enc}(\mathbf{m})_{|S}=\mathbf{a})]\\
=\mbox{Pr}[\mbox{AMDdec}(\mbox{WtIIdec}(\mbox{WtIIenc}(\mbox{AMDenc}(\mathbf{m},\mathbf{I}),\mathbf{J})+g(\mathbf{a})))\notin \{\mathbf{m},\perp\}\\
\ \ \ \ |(\mbox{Enc}(\mathbf{m})_{|S}=\mathbf{a})]\\
=\mbox{Pr}[\mbox{AMDdec}(\mbox{AMDenc}(\mathbf{m},\mathbf{I})+\mbox{WtIIdec}(g(\mathbf{a})))\notin \{\mathbf{m},\perp\}|(\mbox{Enc}(\mathbf{m})_{|S}=\mathbf{a})]\\
=\mbox{Pr}[\mbox{AMDdec}(\mbox{AMDenc}(\mathbf{m},\mathbf{I})+\mbox{WtIIdec}(g(\mathbf{a})))\notin \{\mathbf{m},\perp\}],\\
\leq \delta,
\end{array}
$$
where the third equality follows from the linearity of (WtIIenc,WtIIdec), the last equality follows from the fact that $\mathbf{I}$ and $\mbox{Enc}(\mathbf{m})_{|S}$ are independent discussed in the beginning of the proof, 
and the inequality follows trivially from the security of (AMDenc,AMDdec).

\textbf{Step 2.} In this step, we conclude the first part of the proof by showing
$$
\begin{array}{l}
\mbox{Pr}[\mbox{Dec}(f_{S,g}(\mbox{Enc}(\mathbf{m})))\notin \{\mathbf{m},\perp\}]\\
=\sum_{\bfa} \mbox{Pr}[\mbox{Enc}(\mathbf{m})_{|S}= \bfa]\cdot\mbox{Pr}[\mbox{Dec}(f_{S,g}(\mbox{Enc}(\mathbf{m})))\notin \{\mathbf{m},\perp\}|(\mbox{Enc}(\mathbf{m})_{|S}=\mathbf{a})]\\
\leq \sum_{\bfa} \mbox{Pr}[\mbox{Enc}(\mathbf{m})_{|S}= \bfa]\cdot\delta\\
=\delta,
\end{array}
$$
where the inequality follows from \textbf{Step 1.}

Finally, the rate of the $(\rho,0)$-wiretap II code in Construction \ref{ex: WtII} is $\frac{k+2}{n}=1-\rho$. So the asymptotic rate of the strong \LVAMD code family is 
$$
\lim_{n\rightarrow \infty}\frac{k}{n}=\lim_{n\rightarrow \infty}\frac{(1-\rho)n-2}{n}=1-\rho.
$$
\qed
\end{proof}

{
We next show a construction of weak \LVAMD codes that achieves asymptotic rate $1$.
\begin{construction}\label{th: weak LV-AMD}Let $\mathbb{F}_q$ be a finite field of $q$ elements. Let $G$ be a $k\times k$ non-singular matrix over $\mathbb{Z}_{q-1}$ such that each column of $G$ consists of distinct entries, i.e., $g_{i,j}\neq g_{i^{'},j}$ for any $j$ and $i\neq i^{'}$. Assume the entries of $G$ (viewed as integers) is upper-bounded by $\psi k$ for constant $\psi$, i.e., $g_{i,j}\leq \psi k$. Then the following construction gives a family of weak LV-AMD codes of asymptotic rate $1$ with any leakage parameter $\rho<1$. 
$$
\mbox{Enc}\footnote{The message distribution in this construction is not exactly uniform over $\mathbb{F}_q^k$ but $(\mathbb{F}_q^{*})^k$. So this construction can achieve security even when the message distribution is not uniform.}:(\mathbb{F}_q^{*})^k\rightarrow (\mathbb{F}_q^{*})^k\times\mathbb{F}_q: \mathbf{m}\mapsto(\mathbf{m}||f(\mathbf{m},G)),
$$
where $\mathbb{F}_q^{*}$ denotes the set of non-zero elements of $\mathbb{F}_q$ and the tag $f(\mathbf{m},G)$ is generated as follows.
\begin{equation}\label{eq: tag function}
f(\mathbf{m},G)=\sum_{j=1}^k\prod_{i=1}^km_i^{g_{i,j}}.
\end{equation}
The decoder dec checks if the first  $k$-tuple of the input vector, when used in \ref{eq: tag function}, match 
the last component. 
\end{construction}
The proof of Construction \ref{th: weak LV-AMD} is given in Appendix \ref{apdx: proof of theorem weak LV-AMD}.

Concrete constructions of the 
matrix $G$ can be found in 
\cite[Remark 2]{LLR-AMD}. 
}

\section{Applications}\label{sec: applications}
\subsection{Robust Ramp SSS}

A \textit{Secret Sharing Scheme (SSS)}  consists of two algorithms (Share,Recover). The algorithm Share maps a secret $\mathbf{s}\in\mathcal{S}$ to a vector $\mathbf{S}=(S_1,\ldots,S_N)$ where the shares $S_i$ are in some set $\mathcal{S}_i$ and will be 
given to participant $P_i$. The algorithm Recover takes as input a vector of shares $\tilde{\mathbf{S}}=(\tilde{S}_1,\ldots,\tilde{S}_N)$ where $\tilde{S}_i\in\mathcal{S}_i\bigcup\{\perp\}$, where $\perp$ denotes an absent share.  For a $(t, N)$-threshold SSS, $t$ shares reveal no information about the secret $\mathbf{s}$ and $t +1$ shares uniquely recover the secret $\bf s$. 
For a $(t, r, N)$-\textit{ramp} SSS \cite{ramp SSS} with $(\share_\rsss, \recover_\rsss)$ as sharing and recovering algorithms, the access structure is specified by  
two thresholds. The privacy threshold is $t$, and the reconstruction threshold is $r$.
In a $(t, r, N)$-ramp SSS, subsets of $t$ or less shares do not reveal any information about the secret, and subsets of $r$ or more shares can uniquely recover the secret $\bf s$. A set of  shares  of size $t< a < r$ 
may leak some information about the secret. In particular, we consider ramp schemes in which
a set of $t + \alpha (r - t)$ shares leak $\alpha$ fraction of secret  information. \\

\begin{definition}[$(t, r, N)$-Ramp Secret Sharing Scheme] \label{def: rsss} 
A $(t, r, N)$-ramp secret sharing scheme is consist of a \ pair \ of \ algorithms $(\share_\rsss, \recover_\rsss)$, 
where  $\share_\rsss$ randomly maps a secret $\bfs\in\mathcal{S}$ to a share vector $\bfS=(S_1,\cdots,S_N)$ and $\recover_\rsss$ deterministically reconstruct a $\tilde{\bfs}\in\mathcal{S}$ or output $\bot$, satisfy the following.

\begin{itemize}
\item Privacy: The adversary can access up to $r - 1$ shares. If the number of shares accessed by the adversary is $a\leq t$, no information will be leaked about the secret. If the number of leaked share is $a=t + \alpha(r - t)$, where $0 < \alpha < 1$, 
 then
$\tilde{H}_\infty(S|S_{i_1} \cdots S_{i_a})\geq H_\infty(S)-\alpha\log |\mathcal{S}|$  \footnote{This definition of leakage is seemingly different from \cite{new ramp}, where uniform distribution of secret $S$ is assumed and Shannon entropy is used instead of min-entropy. 
}
, for $S\leftarrow\mathcal{S}$ and any $\{i_1, \cdots, i_a\} \subset [N]$.

\item Reconstruction: Any $r$ correct shares can reconstruct the secret $\bfs$.
\end{itemize}
\end{definition}

 A \textit{linear ramp SSS} has the additional property that the Recover function is linear, namely, for any $\mathbf{s}\in\mathcal{G}$, any share vector $\mathbf{S}$ of $\mathbf{s}$, and any vector $\mathbf{S}^{'}$ (possibly containing some $\perp$ symbols), we have $\recover_\rsss( \mathbf{S} + \mathbf{S}^{'}) = \mathbf{s} + \recover_\rsss(\mathbf{S}^{'})$, where vector addition is defined element-wise and addition with $\perp$ is defined by $\perp+  {\bf x} = \mathbf{x} + \perp = \perp$ for all $\mathbf{x}$. 
 In a linear SSS,   the adversary can modify the shares 
  $\tilde{S}_i=S_i+\Delta_i$,  such that   
  the difference $\Delta=\tilde{\mathbf{s}}-\mathbf{s}$ between the reconstructed secret and the shared secret, is known. \\   

In a  $(t, N, \delta)$-robust SSS, for any $t + 1$ shares with at most $t$ shares modified by the adversary, the reconstruction algorithm can recover the secret $\bfs$, or detect the adversarial modification and output $\perp$, with probability at least $1 - \delta$ \cite{AMD}.
That is with probability at most $\delta$ the secret is either not recoverable, or a wrong secret is accepted.
  A modular construction of the robust SSS using an AMD code and a linear SSS  is given by Cramer {\it et al.} \cite{AMD}.\\
 
  We define robust ramp secret sharing scheme when the adversary can adaptively corrupt up to $t + \rho(r - t)$ shares, where $0<\rho<1$ is a constant (level of robustness against active adversaries).

\begin{definition}[$(t, r, N, \rho, \delta)$-Robust Ramp Secret Sharing Scheme] \label{def: rrsss} 
A $(t, r, N, \rho, \delta)$-robust ramp secret sharing scheme is consist of a pair of algorithms $(\share_\rrsss, \recover_\rrsss)$, 
where  $\share_\rrsss$ randomly maps a secret $\bfs\in\mathcal{S}$ to a share vector $\bfS=(S_1,\cdots,S_N)$ and $\recover_\rrsss$ deterministically reconstruct a $\tilde{\bfs}\in\mathcal{S}$ or output $\bot$, satisfy the following.

\begin{itemize}
\item Privacy: The adversary can access up to $r - 1$ shares. If the number of shares accessed by the adversary is $a\leq t$, no information will be leaked about the secret. If the number of leaked share is $a=t + \alpha(r - t)$, where $0 < \alpha < 1$, 
 then
$\tilde{H}_\infty(S|S_{i_1} \cdots S_{i_a})\geq H_\infty(S)-\alpha\log |\mathcal{S}|$, for $S\leftarrow\mathcal{S}$ and any $\{i_1, \cdots, i_a\} \subset [N]$.

\item Reconstruction: Any $r$ correct shares can reconstruct the secret $s$.

\item Robustness: For any $r$ shares with at most $t + \rho(r - t)$ corrupted shares, the probability that either  the secret is correctly reconstructed, or the the adversary's modifications being detected, is at least $1 - \delta$. 

\end{itemize}

\end{definition}

We propose a general construction of robust ramp secret sharing scheme using a  $\rho_\amd$-AMD and  $(t, r, N)$-ramp secret sharing scheme. 

\begin{theorem} Consider a linear $(t, r, N)$-ramp secret sharing scheme with the algorithm pair $(\share_\rsss, \recover_\rsss)$ and shares $\cS_i \in \bF_q^m$,  $i = 1, \cdots, N$, and let (Enc,Dec) be a $\rho_\amd$-AMD code  $\bF_q^k \rightarrow \bF_q^{n}$,  with   failure probability 
 $\delta_\amd$ and $n = (r - t)m$. Then there is a robust ramp secret sharing scheme with algorithm pair  $(\share_\rrsss, \recover_\rrsss)$ given by $\share_\rrsss(\mathbf{s})=\share_\rsss(\mbox{Enc}(\mathbf{s}))$ and $\recover_\rrsss(\tilde{\mathbf{S}})=\mbox{Dec}(\recover_\rsss(\tilde{\mathbf{S}}))$ is a $(t, r, N, \rho, \delta)$-Robust Ramp Secret Sharing Scheme with $\rho \leq \rho_\amd$ and $\delta \leq \delta_\amd$.
\end{theorem}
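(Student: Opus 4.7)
The plan is to verify the three defining properties of a $(t,r,N,\rho,\delta)$-robust ramp SSS in Definition \ref{def: rrsss} for the composed scheme $\share_\rrsss(\mathbf{s})=\share_\rsss(\mbox{Enc}(\mathbf{s}))$ and $\recover_\rrsss(\tilde{\mathbf{S}})=\mbox{Dec}(\recover_\rsss(\tilde{\mathbf{S}}))$, treating the $\rho_\amd$-AMD codeword $\mbox{Enc}(\mathbf{s})\in\bF_q^n$ as the ``secret'' of the inner linear $(t,r,N)$-ramp SSS (which has secret space $\bF_q^n$ since $n=(r-t)m$). Reconstruction is immediate: any $r$ correct shares give $\recover_\rsss(\mathbf{S})=\mbox{Enc}(\mathbf{s})$ by the ramp reconstruction property, and $\mbox{Dec}$ recovers $\mathbf{s}$ by correctness of the AMD code. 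Privacy is obtained by pushing the ramp privacy bound through the encoder: for $a\leq t$ shares no information leaks about $\mbox{Enc}(\mathbf{s})$ and hence about $\mathbf{s}$, while for $a=t+\alpha(r-t)$ shares the bound $\tilde{H}_\infty(\mbox{Enc}(\mathbf{S})\mid S_{i_1},\dots,S_{i_a})\geq H_\infty(\mbox{Enc}(\mathbf{S}))-\alpha\log q^n$ transfers to the corresponding bound on $\mathbf{s}$ by monotonicity of conditional min-entropy under functions.

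The substantive part is robustness. I would combine three ingredients. First, by linearity of $\recover_\rsss$, if the adversary replaces $\mathbf{S}$ by $\tilde{\mathbf{S}}=\mathbf{S}+\boldsymbol{\Delta}$ (with $\boldsymbol{\Delta}$ supported only on the at most $t+\rho(r-t)$ corrupted positions), then $\recover_\rsss(\tilde{\mathbf{S}})=\mbox{Enc}(\mathbf{s})+\Delta_c$ where $\Delta_c=\recover_\rsss(\boldsymbol{\Delta})\in\bF_q^n$ is an additive offset on the AMD codeword, so $\recover_\rrsss(\tilde{\mathbf{S}})=\mbox{Dec}(\mbox{Enc}(\mathbf{s})+\Delta_c)$. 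Second, the adversary's view $\mathbf{Z}$ consists of at most $t+\rho(r-t)$ shares; applying the ramp privacy bound with $\alpha=\rho$ and using $\log q^n=n\log q$, I get $\tilde{H}_\infty(\mbox{Enc}(\mathbf{s})\mid \mathbf{Z})\geq H_\infty(\mbox{Enc}(\mathbf{s}))-\rho\cdot n\log q\geq H_\infty(\mbox{Enc}(\mathbf{s}))-\rho_\amd\cdot n\log q$, which is exactly the leakage condition of Definition \ref{def: AMD with leakage} for the $\rho_\amd$-AMD adversary. Third, since the offset $\Delta_c$ the adversary produces is an arbitrary function of $\mathbf{Z}$ taking values in $\bF_q^n$, the $\rho_\amd$-AMD security guarantee directly yields $\mbox{Pr}[\mbox{Dec}(\mbox{Enc}(\mathbf{s})+\Delta_c)\notin\{\mathbf{s},\perp\}]\leq \delta_\amd$, and therefore the robustness probability bound $\delta\leq\delta_\amd$ follows.

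The main obstacle I expect is the careful bookkeeping in the second step: one has to argue that the adaptive, share-by-share adversary of the ramp SSS induces a valid $\rho_\amd$-AMD leakage variable $\mathbf{Z}$ with the required conditional min-entropy, and that the adversary's subsequent choice of $\Delta_c$ really is a function of $\mathbf{Z}$ alone (not of the unread codeword components). The fact that the adversary's corrupted positions are a subset of the observed positions, together with linearity, ensures $\Delta_c$ is computable from $\mathbf{Z}$. A minor technical point is that the offsets $\Delta_c$ achievable via $\recover_\rsss(\boldsymbol{\Delta})$ may not exhaust all of $\bF_q^n$, but this only restricts the adversary further and is absorbed by the worst-case guarantee of the $\rho_\amd$-AMD code over all offsets in $\bF_q^n$. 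Once these subtleties are verified, the three inequalities $\rho\leq\rho_\amd$ and $\delta\leq\delta_\amd$ fall out of a direct parameter comparison.
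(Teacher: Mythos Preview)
Your proposal is correct and follows essentially the same approach as the paper: use linearity of $\recover_\rsss$ to convert share corruption into an additive offset $\Delta_c$ on the AMD codeword, use the ramp privacy bound at level $\alpha=\rho$ to show the adversary's view $\mathbf{Z}$ satisfies the $\rho_\amd$-AMD leakage constraint, and then invoke the AMD security guarantee. If anything, your plan is slightly more thorough than the paper's proof, which explicitly argues only the leakage bound and robustness and leaves the Privacy and Reconstruction properties of the composed scheme implicit; your remarks on the offset range and on $\Delta_c$ being a function of $\mathbf{Z}$ alone are accurate and go a bit beyond what the paper spells out.
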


\begin{proof}
First, we show that if the adversary reads at most $t + \rho(r - t)$ shares, the  $\rho_\amd$-AMD codeword $c$ leaks at most $\rho n \log q$ informations. Since the  $\rho_\amd$-AMD codeword is encoded by a $(t, r, N)$ ramp secret sharing scheme, $t$ shares will not leak any information about the $\rho_\amd$-AMD codeword $c$. Given that the share size  $|\cS_i|\leq q^m$ and $n = (r - t)m$,  
the leakage of the extra $\rho(r - t)$ shares will leak at most $\rho n \log q$ bit of information about the $\rho_\amd$-AMD codeword $c$.

 
Second, we show that the resulting secret sharing scheme is $\delta$-robust. 
For a  secret $\mathbf{s}$, let $\mathbf{S}\leftarrow \share_\rrsss(\mathbf{s})$ be the original  share vector and $\tilde{\mathbf{S}}$ be the corrupted one, and 
let $\mathbf{S}^{'}=\tilde{\mathbf{S}}-\mathbf{S}$. For any $r$ shares, the failure probability of the reconstruction is given by,  
$$
\begin{array}{ll}
\mbox{Pr}[\recover_\rrsss(\tilde{\mathbf{S}}) \notin \{\mathbf{s},\perp\}]&\stackrel{(1)}=\mbox{Pr}[\mbox{Dec}(\recover_\rsss (\mathbf{S})+\recover_\rsss (\mathbf{S}^{'}))\notin\{\mathbf{s},\perp\}]\\
                                                                                                                     &=\mbox{Pr}[\mbox{Dec}(\mbox{Enc}(\mathbf{s})+\Delta)\notin\{\mathbf{s},\perp\}],\\
\end{array}
$$
where $\Delta = \recover_\rsss(\mathbf{S}^{'})$ is chosen by the adversary $\mathbb{A}$, and (1) uses the linearity of the ramp scheme.
In choosing $\Delta$, the adversary  $\mathbb{A}$ can use  at most $\rho$ fraction of  information  
in the  $\rho_\amd$-AMD codeword $c = \mbox{Enc}(\mathbf{s})$.
Since at most $\rho n \log q$ information bit  is leaked to the adversary, that is $\tilde{H}_\infty( C |\bfZ)\geq H_\infty(C) - \rho n \log q $, from the definition of $\rho_\amd$-AMD code with $\rho \leq \rho_\amd$, the  decoding algorithm $\mbox{Dec}$ outputs correct secret $\bfs$, or detects the error $\perp$, with probability at least $1-\delta_\amd$. This means that the ramp secret sharing scheme is robust and outputs either the correct secret, or detects the adversarial tampering, with probability at most  
 $1-\delta \geq 1-\delta_\amd$. 
Thus  a 
$(t, r, N)$-ramp secret sharing scheme and a $\rho_\amd$-AMD with security parameter $\delta_\amd$, give a $(t, r, N, \rho, \delta)$-robust ramp secret sharing scheme with   $\delta \leq \delta_\amd$ and $\rho \leq \rho_\amd$.
\qed
\end{proof}

 \remove{
A \textit{Secret Sharing Scheme (SSS)}  consists of two algorithms ($\share,\recover$). The algorithm Share maps a secret $\mathbf{s}$ from some group $\mathcal{G}$ to a vector $\mathbf{S}=(S_1,\ldots,S_N)$ where the shares $S_i$ are in some group $\mathcal{G}_i$ and will be given to participant $P_i$. The algorithm Recover takes as input a vector of shares $\tilde{\mathbf{S}}=(\tilde{S}_1,\ldots,\tilde{S}_N)$ where $\tilde{S}i\in\mathcal{G}_i\bigcup\{\perp\}$, where $\perp$ denotes an absent share.  For a $(t, N)$-threshold SSS \cite{Shamir SSS}, $t$ shares or less reveal no information about the secret $\mathbf{s}$ and $t +1$ shares uniquely recover the secret $\bf s$. 
For a $(t, r, N)$-\textit{ramp} SSS \cite{ramp SSS,new ramp} with $(\share_\rsss, \recover_\rsss)$ as sharing and recovering algorithms, the access structure is specified by  
two thresholds. The privacy threshold is $t$, and the reconstruction threshold is $r$.
In a $(t, r, N)$-ramp SSS, subsets of $t$ or less shares do not reveal any information about the secret, and subsets of $r$ or more shares can uniquely recover the secret $\bf s$. A set of  shares  of size $t+1 \leq \ell < r$ 
may leak some information about the secret. In particular, we consider ramp schemes in which
a subset of $t + \rho (r - t)$ share leak $\rho$ fraction of secret  information. \\

\begin{definition}[$(t, r, N)$-Ramp Secret Sharing Scheme] 
\label{def: rsss} \\
A $(t, r, N)$-ramp secret sharing scheme is consist of a \ pair \ of \ algorithms $(\share_\rsss, \recover_\rsss)$, 
where  $\share_\rsss$ randomly maps a secret $\bfs$ from some group $\mathcal{G}$ to a share vector $\bfS=(S_1,\cdots,S_N)$ and $\recover_\rsss$ deterministically reconstruct a $\tilde{\bfs}\in\mathcal{G}$ or output $\bot$, satisfy the following.

\begin{itemize}
\item Privacy: The adversary can access up to $r - 1$ shares. If the number of shares accessed by the adversary is $t$ or less, no information is leaked. If the number of accessed shares is $a=t + \alpha(r - t)$, where $0 < \alpha < 1$, then \footnote{This definition of linear leakage is seemingly different from \cite{new ramp}, where uniform distribution of secret $S$ is assumed and Shannon entropy is used instead of min-entropy. 
}
$$\tilde{H}_\infty(S|S_{i_1} \cdots S_{i_a})\geq H_\infty(S)-\alpha\log |\mathcal{G}|, \mbox{ for }S\leftarrow\mathcal{G}\mbox{ and any }\{i_1, \cdots, i_a\} \subset [N].$$

\item Reconstruction: Any $r$ correct shares can reconstruct the secret $\bfs$.


\end{itemize}
\end{definition}

 A \textit{linear \textcolor{red}{ramp} SSS} has the additional property that the Recover function is linear, namely, for any $\mathbf{s}\in\mathcal{G}$, any share vector $\mathbf{S}$ of $\mathbf{s}$, and any vector $\mathbf{S}^{'}$ (possibly containing some $\perp$ symbols), we have $\recover_{\textcolor{red}{\rsss}}( \mathbf{S} + \mathbf{S}^{'}) = \mathbf{s} + \recover_{\textcolor{red}{\rsss}}(\mathbf{S}^{'})$, where vector addition is defined element-wise and addition with $\perp$ is defined by $\perp+  {\bf x} = \mathbf{x} + \perp = \perp$ for all $\mathbf{x}$. 
 In a linear SSS,   the adversary can modify the shares 
  $\tilde{S}_i=S_i+\Delta_i$,  such that   
  the difference $\Delta=\tilde{\mathbf{s}}-\mathbf{s}$ between the reconstructed secret and the shared secret, is known. \textcolor{blue}{This means that the adversary can introduce any algebraic change (adding $\Delta$) to the secret, which motivates the study of the robustness property in SSS. 
  }\\   

general robustness: In a  $(t, N, \delta)$-robust SSS, for any $t + 1$ shares with at most $t$ shares modified by the adversary, the reconstruction algorithm can recover the secret $\bfs$, or detect the adversarial modification and output $\perp$, with probability at least $1 - \delta$ \cite{AMD}.
That is with probability at most $\delta$ \textcolor{red}{the secret is either not recoverable, or} a wrong secret is accepted.
  A modular construction of the robust SSS using an AMD code and a linear SSS  is given by Cramer {\it et al.} \cite{AMD}.\\
 
  We define robust ramp secret sharing scheme when the adversary can adaptively corrupt up to $t + \rho(r - t)$ shares, where $0<\rho<1$ is a constant (level of robustness against active adversaries). 

A $(t, r, N)$-ramp secret sharing scheme is called a $(t, r, N, \rho, \delta)$-Robust Ramp Secret Sharing Scheme ($(t, r, N, \rho, \delta)$-robust ramp SSS) if an additional robustness condition is satisfied. More precisely,

{\color{blue}
\begin{definition}[$(t, r, N, \rho, \delta)$-Robust Ramp Secret Sharing Scheme] 
\label{def: rrsss} \\
A $(t, r, N, \rho, \delta)$-robust ramp SSS is consist of a pair of algorithms $(\share_\rrsss, \recover_\rrsss)$, 
where  $\share_\rrsss$ randomly maps a secret $\bfs$ from some group $\mathcal{G}$ to a share vector $\bfS=(S_1,\cdots,S_N)$ and $\recover_\rsss$ deterministically reconstruct a $\tilde{\bfs}\in\mathcal{G}$ or output $\bot$, satisfy the following.

\begin{itemize}
\item Privacy: The adversary can access up to $r - 1$ shares. If the number of shares accessed by the adversary is $t$ or less, no information is leaked. If the number of accessed shares is $a=t + \alpha(r - t)$, where $0 < \alpha < 1$, then 
$$\tilde{H}_\infty(S|S_{i_1} \cdots S_{i_a})\geq H_\infty(S)-\alpha\log |\mathcal{G}|, \mbox{ for }S\leftarrow\mathcal{G}\mbox{ and any }\{i_1, \cdots, i_a\} \subset [N].$$

\item Reconstruction: Any $r$ correct shares can reconstruct the secret $\bfs$.


\item  Robustness: 
If at most $t + \rho(r - t)$ shares are accessed by the adversary, the probability that $\tilde{\bfs}\neq\bfs$ is at most $\delta$.
\end{itemize}
\end{definition}
}

We propose a general construction of robust ramp secret sharing scheme using a  $\rho$-AMD and a linear $(t, r, N)$-ramp secret sharing scheme. 

{\color{blue}
\begin{theorem} Consider a linear $(t, r, N)$-ramp secret sharing scheme with the algorithm pair $(\share_\rsss, \recover_\rsss)$ with secrets in $\mathcal{G}=\bF_q^n$
and let (Enc,Dec) be a $\rho$-AMD code  with message set $\bF_q^k$, codeword set $\bF_q^{n}$ and security parameter 
 $\delta$.
 Then the algorithm pair  $(\share_\rrsss, \recover_\rrsss)$ given by $\share_\rrsss(\mathbf{s})=\share_\rsss(\mbox{Enc}(\mathbf{s}))$ and $\recover_\rrsss(\tilde{\mathbf{S}})=\mbox{Dec}(\recover_\rsss(\tilde{\mathbf{S}}))$ is a $(t, r, N, \rho, \delta)$-robust ramp secret sharing scheme.
\end{theorem}

\begin{proof}
Let $\mathbf{X}=\mbox{Enc}(\mathbf{s})$. Then $\share_\rrsss(\mathbf{s})=\share_\rsss(\mathbf{X})=(S_1,\cdots,S_N)$. We verify that all three conditions in Definition \ref{def: rrsss} hold.
\begin{itemize}
\item Privacy. If the adversary read $t + \alpha(r - t)$ shares of $\share_\rrsss(\mathbf{s})$, then the privacy of $(\share_\rsss, \recover_\rsss)$ guarantees that an $\alpha$ fraction of information about $\mathbf{X}$ is leaked, which leads to that an $\alpha$ fraction of information about $\mathbf{s}$ is leaked. This trivially includes the case of $t$ or less shares (let $\alpha=0$).


\item Reconstruction. The reconstruction of $(\share_\rsss, \recover_\rsss)$ guarantees that any $r$ components of $(S_1,\cdots,S_N)$ can correctly recover $\mathbf{X}$. Then the correctness property of (Enc,Dec) guarantees that $\mbox{Dec}(\mathbf{X})=\bfs$.
 
\item Robustness. 
For a  secret $\mathbf{s}$, let $\mathbf{S}\leftarrow \share_\rrsss(\mathbf{s})$ be the original  share vector, and $\tilde{\mathbf{S}}$ be the corrupted one, and 
let $\mathbf{S}^{'}=\tilde{\mathbf{S}}-\mathbf{S}$. For any $r$ shares, the probability of reconstructing a wrong secret is given by,  
$$
\begin{array}{ll}
\mbox{Pr}[\recover_\rrsss(\tilde{\mathbf{S}}) \notin \{\mathbf{s},\perp\}]&\stackrel{(*)}=\mbox{Pr}[\mbox{Dec}(\recover_\rsss (\mathbf{S})+\recover_\rsss (\mathbf{S}^{'}))\notin\{\mathbf{s},\perp\}]\\
                                                                                                                     &=\mbox{Pr}[\mbox{Dec}(\mbox{Enc}(\mathbf{s})+\Delta)\notin\{\mathbf{s},\perp\}],\\
\end{array}
$$
where $\Delta = \recover_\rsss(\mathbf{S}^{'})$ is constructed by the adversary $\mathbb{A}$ that can choose at most $b=t + \rho(r - t)$ shares $S_{i_1} \cdots S_{i_b}$ to observe, and (*) uses the linearity of $(\share_\rsss, \recover_\rsss)$.
Finally, the privacy of $(\share_\rsss, \recover_\rsss)$ guarantees that $\tilde{H}_\infty(\mathbf{X}|S_{i_1} \cdots S_{i_b})\geq H_\infty(\mathbf{X})-\rho\log |\bF_q^n|$. It then follows from the security of the \rhoAMD code (Enc,Dec) that 
$$
\mbox{Pr}[\recover_\rrsss(\tilde{\mathbf{S}}) \notin \{\mathbf{s},\perp\}]\leq\mbox{Pr}[\mbox{Dec}(\mbox{Enc}(\mathbf{s})+\Delta)\notin\{\mathbf{s},\perp\}]\leq \delta.
$$
\end{itemize}
\qed
\end{proof}
}
}

\subsection{Wiretap II with Algebraic Adversary} 





The Wiretap II \cite{WtII} problem considers a passive adversary that can read a $\rho$ fraction of the codeword components and the goal is to prevent the adversary from 
 learning information about the sent message. 
Wiretap II with an active adversary has been considered in 
\cite{Lai Lifeng} and 
later generalized in \cite{eAWTP,AWTP}. 
In this latter 
 general model, called Adversarial Wiretap (AWTP) mode,  the adversary is characterized by two parameters $\rho_r$ and $\rho_w$, denoting the fraction of the codeword components the adversary can ``read'' and ``modify additively ", respectively.
  The goal 
  is two-fold: to prevent the adversary from obtaining any information (secrecy) and, to recover the message despite the changes made by the 
adversary  
 (reliability). It 
 was proved \cite{AWTP} 
  that in AWTP model, where the adversary can write to a $\rho_w$ fraction of the codeword components additively, secure and
  reliable communication is possible 
  if, $\rho_r+\rho_w <1$. This says that when $\rho_r+\rho_w > 1$, one can only hope for weaker type of security, for example, secrecy and error detection.
  We 
  consider  wiretap II with an algebraic adversary, who can read a $\rho$ fraction of the codeword components and tamper with the whole codeword algebraically, namely, adding a non-zero group element to the codeword (codewords are assumed to live in a group). The adversary in this model is equivalent to the AWTP adversary with $\rho_r=\rho$ and
$\rho_w=1$. 
But the coding goal of wiretap II with an algebraic adversary is different from AWTP.

\begin{definition}\label{def_awtpchannel}
An algebraic tampering wiretap II channel is  
a communication channel between Alice and Bob that is (partially) controlled by an adversary Eve with two following two capabilities.

\begin{itemize}

\item Read:  Eve adaptively selects a fraction $\rho$ of the components 
of the transmitted codeword $\mathbf{c}=c_1,\cdots,c_n$ to read, namely, Eve's knowledge of the transmitted codeword is given by  
$ 
\mathbf{Z}=\{c_{i_1},\cdots, c_{i_{\rho n}}\}
$,
where $S=\{i_1,\cdots,i_{\rho n}\}\subset [n]$ is chosen by Eve.

\item Write:  Assume $\mathbf{c}\in\mathcal{G}$ for some additive group $\mathcal{G}$. Eve chooses an ``off-set'' $\Delta\in\mathcal{G}$ according to $\mathbf{Z}$ and add it to  
the codeword  $\mathbf{c}$, namely, the channel outputs $\mathbf{c}+\Delta$.
\end{itemize}
\end{definition}

\begin{definition}[$(\rho,\epsilon,\delta)$-algebraic tampering wiretap II ($(\rho,\epsilon,\delta)$-AWtII)] \label{def_awtpcode}
 A $(\rho,\epsilon,\delta)$-AWtII code is a coding scheme (Enc,Dec) 
that guarantees the following two properties.
\begin{itemize}
	 
\item {\em Secrecy:} For any pair of messages $\mathbf{m}_0$ and $\mathbf{m}_1$, any $S\subset [n]$ of size $|S|\leq n\rho$, (\ref{eq: WtII security}) should hold, namely,
$$
\SD(\mbox{Enc}(\mathbf{m}_0)_{|S};\mbox{Enc}(\mathbf{m}_1)_{|S})\leq \varepsilon.
$$

\item {\em Robustness:}  If the  adversary is passive, Dec always outputs the correct message. If the adversary is active, the probability that  the decoder outputs a wrong message 
is bounded by $\delta$. That is, for any message $\mathbf{m}$ and any $\rho$-algebraic tampering wiretap II adversary $\mathbb{A}$,
\[
\bP[\eeedec({\mathbb{A}}(\eeeenc(\mathbf{m}))) \notin \{\mathbf{m}, \perp\}] \leq \delta.
\]
\end{itemize}
\end{definition}


The secrecy of $(\rho,\epsilon,\delta)$-AWtII code implies that a $(\rho,\epsilon,\delta)$-AWtII code is a $(\rho,\epsilon)$-WtII code. The following rate upper bound follows directly from Lemma \ref{lem: WtII upper bound}. 
\begin{corollary}
The rate of $(\rho,0,\delta)$-AWtII codes is bounded by $R \leq 1 - \rho$.
\end{corollary}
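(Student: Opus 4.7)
The proof I would write is essentially a one-line reduction, so my plan is to make that reduction completely explicit. First, I would match definitions: the secrecy clause of Definition \ref{def_awtpcode} (with $\varepsilon=0$) is literally the statistical-distance requirement
$\SD(\mbox{Enc}(\mathbf{m}_0)_{|S};\mbox{Enc}(\mathbf{m}_1)_{|S})\leq 0$
for every pair $\mathbf{m}_0,\mathbf{m}_1$ and every $S\subseteq[n]$ with $|S|\leq n\rho$, which is exactly condition (\ref{eq: WtII security}) used to define a $(\rho,0)$-WtII code in Definition \ref{def: WtII}. The reliability/robustness clause, together with the codeword-length and message-length being unchanged, shows that $(\mbox{Enc},\mbox{Dec})$ satisfies the functional decoding requirement on the noiseless (passive) channel that is built into Definition \ref{def: WtII}.

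Second, I would lift this to a statement about families. Given any family of $(\rho,0,\delta_n)$-AWtII codes with encoding/decoding $\{\mbox{Enc}_n,\mbox{Dec}_n\}$ and $\lim_{n\to\infty}k_n/n = R$, the previous paragraph shows that the same family $\{\mbox{Enc}_n,\mbox{Dec}_n\}$ is also a family of $(\rho,0)$-WtII codes achieving the rate $R$ in the sense of Definition \ref{def: WtII}. (Here it matters that the secrecy parameter $\varepsilon$ is taken to be $0$ in both notions; the value of $\delta_n$ is irrelevant to the WtII property and may even vary with $n$.)

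Third, I would invoke Lemma \ref{lem: WtII upper bound}, which asserts that the achievable rate of $(\rho,0)$-WtII codes is upper bounded by $1-\rho$. Applying this to the family constructed above yields $R\leq 1-\rho$, which is the desired bound.

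There is no real obstacle here beyond bookkeeping; the statement is phrased as a corollary precisely because the only work is verifying that the AWtII secrecy condition coincides with the WtII secrecy condition and that the passive-adversary correctness clause in Definition \ref{def_awtpcode} implies the (noiseless) decoding requirement in Definition \ref{def: WtII}. Once those two identifications are made, the upper bound is inherited from Lemma \ref{lem: WtII upper bound} with no additional argument.
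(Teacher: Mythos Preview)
Your proposal is correct and follows exactly the paper's approach: the paper simply observes that the secrecy clause of a $(\rho,\epsilon,\delta)$-AWtII code makes it a $(\rho,\epsilon)$-WtII code, and then invokes Lemma \ref{lem: WtII upper bound}. Your write-up is more explicit about the family-level bookkeeping and the passive-correctness identification, but the argument is identical in substance.
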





The robustness property of $(\rho,\epsilon,\delta)$-AWtII code is the same as the security of a strong \LVAMD code (see Definition \ref{def: LV-AMD codes}).
Furthermore, the construction of \LVAMD codes in Construction \ref{con: basic construction} uses a $(\rho, 0)$-WtII code to encode $\mathbf{c}=\mbox{AMDenc}(\mathbf{m})$, which guarantees secrecy with respect to any pair of $(\mathbf{c}_0,\mathbf{c}_1)$, and hence secrecy with respect to any pair of $(\mathbf{m}_0,\mathbf{m}_1)$. These assert that Construction \ref{con: basic construction} yields a family of $(\rho,0,\delta)$-AWtII codes.
\begin{corollary}
There exists a family of $(\rho,0,\delta)$-AWtII codes that achieves rate $R = 1 - \rho$.
\end{corollary}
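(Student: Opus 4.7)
The plan is to instantiate Construction \ref{con: basic construction} and verify that it satisfies both security requirements of Definition \ref{def_awtpcode}, then observe that its rate matches the upper bound from Lemma \ref{lem: WtII upper bound}. Concretely, I would take the same $(\mbox{Enc},\mbox{Dec}) = (\mbox{WtIIenc}\circ\mbox{AMDenc}, \mbox{AMDdec}\circ\mbox{WtIIdec})$ built from the AMD code of Construction \ref{ex: AMD} with parameters $(q^k, q^{k+2}, \frac{k+1}{q})$ and the linear $(\rho,0)$-WtII code of Construction \ref{ex: WtII}.

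For the secrecy requirement, I would argue directly: for any two messages $\mathbf{m}_0, \mathbf{m}_1 \in \mathbb{F}_q^k$ and any set $S$ with $|S|\leq \rho n$, write $\mathbf{C}_b = \mbox{AMDenc}(\mathbf{m}_b)$ for $b \in \{0,1\}$. Since $(\mbox{WtIIenc},\mbox{WtIIdec})$ is a $(\rho, 0)$-WtII code, the distribution $\mbox{WtIIenc}(\mathbf{c})_{|S}$ is identical for all $\mathbf{c} \in \mathbb{F}_q^{k+2}$, so in particular, conditioning on the value of $\mathbf{C}_b$ and then averaging gives $\SD(\mbox{Enc}(\mathbf{m}_0)_{|S};\mbox{Enc}(\mathbf{m}_1)_{|S}) = 0$. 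This establishes $\varepsilon = 0$.

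For the robustness requirement, I would invoke the proof of Construction \ref{con: basic construction} verbatim. That proof shows that for any message $\mathbf{m}$ and any tampering function $f_{S,g} \in \mathcal{F}^{add}_\rho$, the probability that $\mbox{Dec}(f_{S,g}(\mbox{Enc}(\mathbf{m}))) \notin \{\mathbf{m},\perp\}$ is at most the AMD security parameter. A $\rho$-algebraic tampering wiretap II adversary in the sense of Definition \ref{def_awtpchannel} is precisely described by such a pair $(S, g)$: the set $S$ captures the read positions and $g$ computes the offset from the observed components. Hence the robustness bound $\delta = \frac{k+1}{q}$ carries over.

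Finally, for the rate, the construction yields an $(n, k)$-scheme with $n = \frac{k+2}{1-\rho}$ (from the $(\rho,0)$-WtII code of Construction \ref{ex: WtII}), so $\lim_{n\to\infty} \frac{k}{n} = 1 - \rho$; letting $q$ grow with $n$ drives $\delta$ to $0$. The main (minor) obstacle is justifying that the adversary class of Definition \ref{def_awtpchannel} coincides with $\mathcal{F}^{add}_\rho$: the wiretap II adversary is only required to choose $S$ and $\Delta$ as a function of the read components (not as an arbitrary function on $\mathbb{F}_q^n$), but this is exactly the form $\mathbf{x} + g(\mathbf{x}_{|S})$, so the classes coincide and no additional work is needed. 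The matching upper bound $R \leq 1-\rho$ from the preceding Corollary then shows that the construction is capacity-achieving.
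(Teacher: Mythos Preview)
Your proposal is correct and follows essentially the same approach as the paper: invoke Construction \ref{con: basic construction}, observe that the outer $(\rho,0)$-WtII layer gives perfect secrecy for any pair of inner AMD codewords and hence for any pair of messages, identify the AWtII adversary with the tampering class $\mathcal{F}^{add}_\rho$ to inherit robustness, and read off the rate $1-\rho$. The paper's justification is just a terser version of exactly this argument.
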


\section{Conclusion}

We considered an extension of AMD codes when the storage 
leaks information and the amount of leaked information 
 is bounded by $\rho\log|\mathcal{G}|$. 
  We defined \rhoAMD codes that provide protection in this scenario,  both with weak and strong security, and
  derived  concrete and asymptotic bounds on the efficiency of codes in these settings.
Table \ref{tb: bounds} compares our results with original AMD codes and an earlier work (called LLR-AMD)  that allow leakage in specific parts of the encoding process. Unlike LLR-AMD that uses different leakage requirements for the weak and strong case, we use a single model to express the leakage and require that the left-over entropy of the codeword be lower bounded. This makes our analysis and constructions more challenging. In particular,  optimal 
constructions of LLR-AMD codes follow directly from the optimal 
constructions of original AMD codes. However constructing optimal \rhoAMD code,  in both weak and strong model, remain open. We gave an explicit construction of a family of codes with respect to a weaker notion of leakage  (\LVAMD) whose rate achieves the upper bounds of the \rhoAMD codes.
  We finally 
 gave two applications of the codes to robust ramp secret sharing schemes and algebraic manipulation wiretap II channel. 

\appendix
\section*{Appendices}
\addcontentsline{toc}{section}{Appendices}
\renewcommand{\thesubsection}{\Alph{subsection}}

\subsection{Proof of Lemma \ref{lem: strong comparison}}\label{sec: proof of strong comparison}
\begin{proof}

Assume a regular encoder and 
consider a message $\mathbf{m}$.

The codeword $\bfX= \mbox{Enc}(\mathbf{m}, \bfR)$ where the randomness of encoding $\bfR$ is a uniformly distributed $r$-bit string. Now consider an adversary  with leakage variable  $\bfZ$. 
 Because of the one-to-one property of the regular encoder, we have

\begin{eqnarray} \label{eq1}
H_\infty( \bfX) = H_\infty(\bfR)=r,
\end{eqnarray}
and
\begin{eqnarray} \label{eq2}
\begin{array}{ll}
\tilde{H}_\infty( \bfX| \bfZ )&=-\log\mathbb{E}_\mathbf{z}\left(\max_{\mathbf{x}}\mbox{Pr}[\mathbf{X}=\mathbf{x}|\mathbf{Z}=\mathbf{z}]\right)\\
                               &=-\log\mathbb{E}_\mathbf{z}\left(\max_{\mathbf{r}}\mbox{Pr}[\mathbf{R}=\mathbf{r}|\mathbf{Z}=\mathbf{z}]\right)\\
                               &=\tilde{H}_\infty( \bfR| \bfZ ).
\end{array}
\end{eqnarray}

For a leakage variable $\bfZ$, we consider two classes of adversaries denoted by $\mathbb{A}_Z$ and $\mathbb{B}_Z$,  depending on the 
conditions that they must satisfy,  as follows:
$\mathbb{A}_Z()$  is an adversary whose leakage variable 
must satisfy a lower bound on  $\tilde{H}_\infty(\bfR| \bfZ)$  and,
$\mathbb{B}_Z()$ is an adversary  whose leakage variable must satisfy 
a lower bound on $\tilde{H}_\infty(\bfX | \bfZ)$.
Both adversaries, when applied to a vector $x$, use their leakage variables to select an offset vector to be added to a codeword.
That is $\mathbb{A}_Z(x) = x+ \Delta_z$ where $\Delta_z \in F_q^n $  is chosen dependent on the leakage  $\bfZ=z$.  
We have the same definition for  $\mathbb{B}_Z(x) = x+ \Delta_z$.

{\bf  i. strong LLR-AMD code $\Rightarrow$  strong \rhoAMD}\\
Now consider a $(q^k,q^n,2^r,\alpha,\delta)$-strong LLR-AMD code $\bf C$ with encoder and decoder pair,  (Enc,Dec).  
For an adversary $\mathbb{A}_Z$ whose  leakage  variable 
satisfies  $ \tilde{H}_\infty(\bfR |\bfZ) \geq (1-\alpha)r$,  we have

$$
\mbox{Pr}[ \mbox{Dec}  (  \mathbb{A}_Z ( \mbox{Enc} ( \mathbf{m},\mathbf{R} ) ) ) 
\notin\{\mathbf{m},\perp\}]   \leq\delta,
$$
where the probability is over the randomness of encoding, and is an expectation over $\mathbf{z}\in\mathcal{Z}$. 

Note that using (\ref{eq1}) and  (\ref{eq2}),   the  $\mathbb{A}_Z$ adversary is also a $\mathbb{B}_Z$ adversary 
satisfying,

\begin{eqnarray} \label{eq3}
\tilde{H}_\infty(\bfX |\bfZ)\geq \tilde{H}_\infty( \bfX) -\alpha r
\end{eqnarray}

Both these adversaries have the same leakage variable $\bfZ$ and so any algorithm Offset$(z)$  used by one,  taking  the value $\bfZ=z$ as input
and  finding the 
the offset $\Delta_z$, can be used by the other also (the two adversaries have the same information). This means that
%
the success probabilities of the two adversaries are the same,
$$
\mbox{Pr}[\mbox{Dec} (\mathbb{A}_Z (\mbox{Enc} (\mathbf{m},\mathbf{R}) ) )
\notin\{\mathbf{m},\perp\}]=
\mbox{Pr}[\mbox{Dec}(\mathbb{B}_Z  (\mbox{Enc}(\mathbf{m},\mathbf{R}) )) 
\notin\{\mathbf{m},\perp\}] \leq\delta.
$$



For \rhoAMD codes, security  is defined against  a $\mathbb{B}_Z$ adversary whose leakage variable $\bfZ$ satisfies,

\begin{eqnarray} \label{eq4}
\tilde{H}_\infty(\bfX |\bfZ)\geq H_\infty( \bfX) -\rho n\log q
\end{eqnarray}
Comparing  (\ref{eq4}) and (\ref{eq3}),  we conclude that  $\bf C$ is a \rhoAMD code for $\rho$  values that satisfy  $\alpha r  \geq \rho n\log q$, namely $\rho\leq\frac{\alpha r}{n\log q}$. 


{\bf  ii. strong  \rhoAMD  $\Rightarrow$  strong LLR-AMD code\\}
An argument similar to \textbf{i.} immediately gives that the $(q^k,q^n,2^r,\alpha,\delta)$-strong LLR-AMD code obtain from \rhoAMD code should satisfy $\alpha\leq\frac{\rho n\log q}{ r}$. Next we show the bound on $r$ follows from Proposition \ref{prop1} together with (\ref{eq1}). Indeed, by Proposition \ref{prop1}, $\tilde{H}_\infty(\bfX |\bfZ)\geq\log\frac{1}{\delta}$ should hold for any $\mathbf{Z}$ satisfying $\tilde{H}_\infty(\bfX |\bfZ)\geq H_\infty( \bfX) -\rho n\log q$. In particular, we must have $H_\infty( \bfX) -\rho n\log q\geq\log\frac{1}{\delta}$. Now we can use (\ref{eq1}) to conclude that $r\geq\log\frac{1}{\delta}+\rho n\log q$.
\qed\end{proof}

\subsection{Proof of Lemma \ref{lem: weak comparison}}\label{sec: proof of weak comparison}
\begin{proof}
The encoder Enc is a one-to-one correspondence between messages and codewords. 
Consider a message variable $\mathbf{M}\leftarrow\mathcal{M}$ (in particular, the uniform distribution is emphasized by $\mathbf{M}_u\stackrel{\$}{\leftarrow}\mathcal{M}$).
The codeword is a variable $\bfX= \mbox{Enc}(\mathbf{M})$. Now consider an adversary  with leakage variable  $\bfZ$. 
Because of the one-to-one property of the encoder, we have

\begin{eqnarray} \label{eq: 1}
H_\infty( \bfX) = H_\infty(\bfM),
\end{eqnarray}
and
\begin{eqnarray} \label{eq: 2}
\begin{array}{ll}
\tilde{H}_\infty( \bfX| \bfZ )&=-\log\mathbb{E}_\mathbf{z}\left(\max_{\mathbf{x}}\mbox{Pr}[\mathbf{X}=\mathbf{x}|\mathbf{Z}=\mathbf{z}]\right)\\
                               &=-\log\mathbb{E}_\mathbf{z}\left(\max_{\mathbf{m}}\mbox{Pr}[\mathbf{M}=\mathbf{m}|\mathbf{Z}=\mathbf{z}]\right)\\
                               &=\tilde{H}_\infty( \bfM| \bfZ ).
\end{array}
\end{eqnarray}

For a leakage variable $\bfZ$, we consider two classes of adversaries denoted by $\mathbb{A}_Z$ and $\mathbb{B}_Z$,  depending on the 
conditions that they must satisfy,  as follows:
$\mathbb{A}_Z()$  is an adversary whose leakage variable 
must satisfy a lower bound on  $\tilde{H}_\infty(\bfM| \bfZ)$  and,
$\mathbb{B}_Z()$ is an adversary  whose leakage variable must satisfy 
a lower bound on $\tilde{H}_\infty(\bfX | \bfZ)$.
Both adversaries, when applied to a vector $x$, use their leakage variables to select an offset vector to be added to a codeword.
That is $\mathbb{A}_Z(x) = x+ \Delta_z$ where $\Delta_z \in F_q^n $  is chosen dependent on the leakage  $\bfZ=z$.  
We have the same definition for  $\mathbb{B}_Z(x) = x+ \Delta_z$.

{\bf  i. weak LLR-AMD code $\Rightarrow$  weak \rhoAMD}\\
Now consider a $(q^k,q^n,\alpha,\delta)$-weak LLR-AMD code $\bf C$ with encoder and decoder pair,  (Enc,Dec).  
For an adversary $\mathbb{A}_Z$ whose  leakage  variable 
satisfies  $ \tilde{H}_\infty(\bfM |\bfZ) \geq (1-\alpha)k\log q$,  we have

$$
\mbox{Pr}[ \mbox{Dec}  (  \mathbb{A}_Z ( \mbox{Enc} ( \mathbf{M}) ) ) 
\notin\{\mathbf{M},\perp\}]   \leq\delta,
$$
where the probability is over the randomness of encoding, and is an expectation over $\mathbf{z}\in\mathcal{Z}$. 

Note that using (\ref{eq: 1}) and  (\ref{eq: 2}),   the  $\mathbb{A}_Z$ adversary is also a $\mathbb{B}_Z$ adversary 
satisfying,

\begin{eqnarray} \label{eq: 3}
\tilde{H}_\infty(\bfX |\bfZ)\geq (1-\alpha)k\log q
\end{eqnarray}

Both these adversaries have the same leakage variable $\bfZ$ and so any algorithm Offset$(z)$  used by one,  taking  the value $\bfZ=z$ as input
and  finding the 
the offset $\Delta_z$, can be used by the other also (the two adversaries have the same information). This means that
%
the success probabilities of the two adversaries are the same,
$$
\mbox{Pr}[\mbox{Dec} (\mathbb{A}_Z (\mbox{Enc} (\mathbf{M}) ) )
\notin\{\mathbf{M},\perp\}]=
\mbox{Pr}[\mbox{Dec}(\mathbb{B}_Z  (\mbox{Enc}(\mathbf{M}_u) )) 
\notin\{\mathbf{M}_u,\perp\}] \leq\delta.
$$



For \rhoAMD codes, security  is defined against  a $\mathbb{B}_Z$ adversary whose leakage variable $\bfZ$ satisfies,

\begin{eqnarray} \label{eq: 4}
\tilde{H}_\infty(\bfX |\bfZ)\geq H_\infty( \bfX) -\rho n\log q,\mbox{ where } \bfX=\mbox{Enc}(\bfM_u).
\end{eqnarray}
Comparing  (\ref{eq: 4}) and (\ref{eq: 3}),  we conclude that  $\bf C$ is a \rhoAMD code for $\rho$  values that satisfy  $\alpha k  \geq \rho n$, namely $\rho\leq\frac{\alpha k}{n}$. 


 {\bf  ii. weak  \rhoAMD  $\Rightarrow$  weak LLR-AMD code\\}
An argument similar to \textbf{i.} immediately gives that the $(q^k,q^n,\alpha,\delta)$-weak LLR-AMD code obtain from \rhoAMD code should satisfy $\alpha\leq\frac{\rho n}{ k}$. 
\qed\end{proof}

\subsection{Proof of Proposition \ref{pro: weak bound}}\label{sec: proof of weak bound}

\begin{proof} 
By Proposition \ref{prop1}, $\tilde{H}_\infty(\bfX |\bfZ)\geq\log\frac{1}{\delta}$ should hold for any $\mathbf{Z}$ satisfying $\tilde{H}_\infty(\bfX |\bfZ)\geq H_\infty( \bfX) -\rho n\log q$. In particular, we must have $H_\infty( \bfX) -\rho n\log q\geq\log\frac{1}{\delta}$. Since the message $\bfM$ of weak \rhoAMD is uniform and the encoder is one-to-one correspondence, $H_\infty( \bfX)=H_\infty( \bfM)=k\log q$. We conclude that $k\log q-\rho n\log q\geq\log\frac{1}{\delta}$, namely, 
\begin{equation}\label{eq: weak rho AMD 1}
q^{\rho n-k}\leq \delta.
\end{equation} 
Similar to the proof of Theorem \ref{lem: generalised upper bound}, we also consider a random attack strategy. Then the total number of valid codewords that do not decode to $\mathbf{M}$ is at least $(q^{k}-1)$, which  is the number of offsets that lead to undetected manipulations. A randomly chosen offset ($\Delta\neq 0^n$) leads to undetected manipulation with probability at most
 $$
 \frac{q^{k}-1}{q^{n}-1}
 $$
 and we must have
\begin{equation}\label{eq: weak rho AMD 2}
\frac{q^{k}-1}{q^{n}-1}\leq \delta.
\end{equation}
\qed\end{proof}
\subsection{Proof of Construction \ref{th: weak LV-AMD}} \label{apdx: proof of theorem weak LV-AMD}

\begin{proof}  Let $\beta$ be a primitive element of $\mathbb{F}_q$. Then every element $m_i\in\mathbb{F}_q^{*}$ can be written as a power of $\beta$: $m_i=\beta^{m_i^{'}}$. (\ref{eq: tag function}) is rewritten as follows.
$$
f(\mathbf{m},G)=\sum_{j=1}^k\beta^{\sum_{i=1}^km_i^{'}g_{i,j}\mbox{ mod }(q-1)}.
$$
According to \cite[Theorem 4]{LLR-AMD} and the proof therein, (Enc,Dec) satisfies $\mbox{Pr}[\mbox{Dec}(\mbox{Enc}(\mathbf{m})+\Delta(\mathbf{Z}_\rho))\notin\{\mathbf{m},\perp\}]\leq\frac{\psi k}{q-1}$ as long as the leakage parameter $\rho$ satisfies $k-(k+1)\rho\geq 1$. What is left to show is for any $\rho<1$ and $\delta>0$, there exists an $N$ such that for all $k+1\geq N$, $k-(k+1)\rho>0$ and $\frac{\psi k}{q-1}\leq \delta$ are both satisfied. Indeed, $k-(k+1)\rho=k(1-\rho)-\rho$, which is bigger than $1$ if $k>\frac{1+\rho}{1-\rho}$. So we can simply let $N=\lceil\frac{1+\rho}{1-\rho}\rceil+1$. And $\frac{\psi k}{q-1}\leq \delta$ can be achieved by choosing a big enough $q$, for example, $q=\omega(\psi k)$ and choose a big enough $k$. 
\qed\end{proof}



\end{document}